\theoremstyle{plain}
\newtheorem{theorem}{Theorem}[section]
\newtheorem{lemma}[theorem]{Lemma}
\theoremstyle{definition} 
\newcommand{\Case}[2]{\smallskip\par{\it Case #1:\/ #2}}
\newcommand{\Subcase}[2]{\smallskip\par{\it Subcase #1:\/ #2}}
\newcounter{oq}
\newcommand{\refeq}[1]{(\ref{eq:#1})}
\newcommand{\of}[1]{\left( #1 \right)}
\newcommand{\E}{\exists}
\newcommand{\und}{\wedge}
\newcommand{\spa}{\mathit{sp}}
\newcommand{\classc}{\ensuremath{\mathcal C}\xspace}
\newcommand{\subgr}[1]{\ensuremath{\mathcal S(#1)}\xspace}
\title{Tight Bounds on the Asymptotic Descriptive Complexity of Subgraph Isomorphism}
\author{Oleg Verbitsky%
\thanks{Institut f\"ur Informatik,
  Humboldt-Universit\"at zu Berlin, Unter den Linden 6, D-10099 Berlin.
  Supported by DFG grant KO 1053/8--1.
  On leave from the IAPMM, Lviv, Ukraine.}
\and
Maksim Zhukovskii%
\thanks{Laboratory of Advanced Combinatorics and Network Applications, 
Moscow Institute of Physics and Technology, Moscow.
Supported by grant No.\ 16-31-60052 of Russian Foundation for Basic Research.}
}
\date{}
\begin{document}

\maketitle

\begin{abstract}
Let $v(F)$ denote the number of vertices in a fixed connected pattern graph $F$.
We show an infinite family of patterns $F$ such that the existence of a subgraph isomorphic
to $F$ is expressible by a first-order sentence of
quantifier depth $\frac23\,v(F)+1$, assuming that the host graph is sufficiently large
and connected. On the other hand, this is impossible for any $F$ with
using less than $\frac23\,v(F)-2$ first-order variables.
\end{abstract}

\section{Introduction}\label{s:intro}

We consider graph properties expressible in first-order logic
over the vocabulary consisting of two binary relations symbols,
$\sim$ for adjacency and $=$ for equality of vertices.
Let $F$ be a fixed pattern graph on the vertex set $\{1,\ldots,\ell\}$.
The Subgraph Isomorphism problem asks whether a given graph
contains a subgraph isomorphic to $F$, which can be expressed by
the first-order sentence
\begin{equation}
  \label{eq:defn}
\E x_1\ldots\E x_\ell\,\of{
\bigwedge_{i\ne j}x_i\ne x_j
\und
\bigwedge_{\{i,j\}\in E(F)}x_i\sim x_j
},  
\end{equation}
where $E(F)$ denotes the edge set of~$F$.

Consider the parameters $D(F)$ and $W(F)$ defined, respectively, as the minimum
quantifier depth and the minimum variable width of a first-order sentence
expressing Subgraph Isomorphism for the pattern graph $F$.
Note the relation $W(F)\le D(F)$, following from the general fact that
any first-order sentence of quantifier depth $d$ can be rewritten using 
at most $d$ variables. Since the sentence \refeq{defn} has quantifier depth $\ell$,
we have $D(F)\le\ell$. On the other hand, note that $K_\ell$, the complete graph
on $\ell$ vertices, contains $F$ as a subgraph, while the smaller complete graph
$K_{\ell-1}$ does not. Since $\ell$ first-order variables are necessary in order
to distinguish between $K_\ell$ and $K_{\ell-1}$, we have $W(F)\ge\ell$ and,
therefore, $W(F)=D(F)=\ell$. In other words, the existence of an $F$ subgraph
cannot in general be expressed more succinctly, with respect to the
quantifier depth or the variable width, than by the exhaustive description~\refeq{defn}.

Assume that the pattern graph $F$ is connected.
The time complexity of Subgraph Isomorphism will not be affected
if we restrict this problem to connected input graphs.
The same holds true for the descriptive complexity:
we still need $\ell$ variables to express Subgraph Isomorphism 
over connected graphs because the ``hard case'' was given by
connected graphs $K_\ell$ and $K_{\ell-1}$. Can it happen, however, 
that this pair is the only obstacle to expressing Subgraph Isomorphism
over connected graphs more succinctly? In fact, it is natural
(in full accordance with computational complexity theory!)
to consider the \emph{asymptotic} descriptive complexity of 
Subgraph Isomorphism over connected graphs. More precisely, let
$D'(F)$ be  the minimum quantifier depth of a first-order sentence
correctly deciding whether or not a graph $G$ contains an $F$ subgraph
over all \emph{sufficiently large connected} graphs $G$;
see Section \ref{ss:logic} for more details. Similarly, let $W'(F)$ denote
the asymptotic version of the width parameter $W(F)$.
The question addressed in this paper is whether the asymptotic
descriptive complexity of Subgraph Isomorphism can be much lower over connected graphs
than in general or, more formally, how much the asymptotic
parameters $D'(F)$ and $W'(F)$ can differ from their standard counterparts
$D(F)$ and~$W(F)$.

In our earlier paper \cite{VZh16}, we found an example of a pattern graph $F$
with $D'(F)\le\ell-3$ and observed, on the other hand, that
$W'(F)\ge\frac12\,\ell-\frac12$ for all $F$. It remained unknown whether
the difference between $W'(F)$ and $W(F)=\ell$ could be arbitrarily large.
We now show an infinite family of pattern graphs $F$ with $D'(F)\le c\cdot\ell$
for a constant factor $c<1$ and determine the minimum value of $c$,
for which such a bound is possible.
More precisely, we show that
\begin{equation}
  \label{eq:D}
D'(F)\le\frac23\,\ell+1  
\end{equation}
for infinitely many $F$, where $\ell$ always denotes the number of vertices in $F$.
We also prove that this upper bound is tight by accompanying it with a nearly
matching bound
\begin{equation}
  \label{eq:W}
W'(F)>\frac23\,\ell-2\text{ for every }F.
\end{equation}

A general reason why the existence of an $\ell$-vertex subgraph $F$ can be
defined in this setting with sharply less than $\ell$ first-order variables lies in the fact
that the logical truth changes if we restrict our realm to large connected graphs.
In particular, some special sentences about subgraph containment become
validities in this realm. 
As an example, consider the following statement:
\begin{description}
\item[($\Phi_\ell$)] 
A graph $G$ contains a subgraph on $\ell$ vertices isomorphic either to 
a path $P_\ell$ or to a star $K_{1,\ell-1}$.
\end{description}
The observation that, for each $\ell$, the statement $\Phi_\ell$ is true for all sufficiently large 
connected graphs~$G$ serves as a starting point
of the graph-theoretic work by Oporowski, Oxley, and Thomas~\cite{OporowskiOT93};
other examples of this kind can be found in Chapter 9.4 of Diestel's textbook~\cite{Diestel}.

Though $\Phi_\ell$ implies no impressive upper bounds for 
$W'(P_\ell)$ nor for $W'(K_{1,\ell-1})$,\footnote{%
These parameters are actually equal to $\ell-2$ and $\ell-1$, respectively;
see Section~\ref{s:exact}.}
this property underlies our analysis of an important hybrid pattern graph.
Specifically, the \emph{sparkler graph} $S_{q,p}$ is obtained by drawing an edge
between an end vertex of a path $P_p$ and the central vertex of a star $K_{1,q-1}$.
We determine the values of $D'(S_{q,p})$ and $W'(S_{q,p})$ up to a small additive constant.
Specifically,
\begin{eqnarray}
D'(S_{q,p})&\le&\max\left(p+2,\,\frac12\,p+q\right),\text{ while}\label{eq:Dspa}\\
W'(S_{q,p})&\ge&\max\left(p,\,\frac12\,p+q-\frac52\right).\label{eq:Wspa}
\end{eqnarray}

The right hand side of \refeq{Dspa} attains its minimum when $p=2q-4$, 
yielding the upper bound~\refeq{D}. Our proof of \refeq{Dspa} exploits
a structural property of connected $S_{q,p}$-free graphs closely related
to the aforementioned properties $\Phi_\ell$: The maximum vertex degree of such graphs
is bounded either from above by a constant or from below by an increasing
function of $\ell=q+p$ (see Lemma~\ref{lem:large-deg}).
Another important ingredient in our analysis is a dichotomy theorem by Pikhurko, Veith, and Verbitsky~\cite{PikhurkoVV06}
about succinct definability of an individual graph,
stated as Lemma \ref{lem:PVV} in Section~\ref{s:prel}.

The lower bound \refeq{Wspa} readily implies that $W'(S_{q,p})>\frac23\,\ell-2$
irrespectively of the parameters $q$ and $p$. This particular fact about the  sparkler graphs 
plays a key role in the proof of the general lower bound \refeq{W}.
Our argument for \refeq{W} actually reveals the structure of extremal patterns $F$
with $W'(F)\approx\frac23\,\ell$: Every such $F$ either has one of a few simple
combinatorial properties\footnote{%
namely those underlying Lemmas \ref{lem:sp-lower} and \ref{lem:sp}.}
or is a sparkler graph $S_{q,p}$ with $p\approx2q$.

We conclude the summary of our results with listing some reasons motivating 
investigation of the asymptotic descriptive complexity of subgraph containment problems
and, perhaps, also other first-order properties of graphs.

\paragraph{\it Relation to computational complexity.}
The model-check\-ing problem for a fixed first-order sentence $\Phi$ is solvable
in time $O(n^{W(\Phi)})$, where $n$ is the number of vertices in an input graph
and $W(\Phi)$ denotes the variable width of $\Phi$
(Immerman \cite{Immerman82}, Vardi \cite{Vardi95}). This implies the time bound $O(n^{W(F)})=O(n^\ell)$
for Subgraph Isomorphism for an $\ell$-vertex pattern graph $F$, which, of course,
corresponds to exhaustive search through all $\ell$-tuples of vertices in the input graph.
If $F$ is connected, then Subgraph Isomorphism efficiently reduces to its restriction
to connected inputs and the time bound $O(n^{W(F)})$ can be replaced with a potentially
better bound $O(n^{W'(F)})$. According to \refeq{D}, for some $F$ this results in time $O(n^{\frac23\,\ell+1})$,
which actually may look not so bad if compared to the general time bound  $O(n^{(\omega/3)\ell +2})$
established for Graph Isomorphism by Ne\v{s}et\v{r}il and Poljak \cite{NesetrilP85};
here $\omega$ is the exponent of fast matrix multiplication, whose value
is known~\cite{Gall14} to lie between $2$ and $2.373$.
However, all patterns $F$ with $W'(F)\approx\frac23\,\ell$ have a large tree part
and, for such graphs, the time bound $O(n^{\frac23\,\ell+1})$ cannot compete with other
algorithmic techniques for Subgraph Isomorphism such as the color-coding method
by Alon, Yuster and Zwick \cite{AlonYZ95}. 

Nevertheless, there is apparently no general reason why the time bounds based
on estimating the asymptotic descriptive complexity cannot be somewhat more
successful in some other situations. Consider, for example, the
\emph{Induced} Subgraph Isomorphism problem, whose computational complexity seems
very different from the not-necessarily-induced case.
Here it is unknown if the bound $O(n^{W'(F)})$ can, for some patterns $F$, be comparable
with the Ne\v{s}et\v{r}il-Poljak time bound. The last bound applies also to Induced Subgraph Isomorphism,
and no techniques achieving running time $O(n^{o(\ell)})$ with a sublinear exponent for infinitely
many patterns are known.
We discuss this topic in \cite{induced}; see also the concluding remarks in Section~\ref{s:concl}.

\paragraph{\it Encoding-independent computations and order-invari\-ant definitions.}
The above discussion shows that any first-order sentence in the vocabulary 
$\{\sim,=\}$ defining a graph property over sufficiently large graphs
can be considered a weak computational model for the corresponding
decision problem. The question of its efficiency is in the spirit of the 
eminent problem on the power of encoding-independent computations; see, e.g., \cite{GraedelG15}.
If we extend the vocabulary with the order relation $<$,
comparison of the two settings is interesting in the context of
\emph{order-invariant definitions}; see, e.g., \cite{Libkin04,Schweikardt13}.
Note also that the setting where arbitrary numerical relations are allowed
brings us in the field of circuit complexity; see \cite{Immerman-book,Libkin04}. 
In this context, the Subgraph Isomorphism problem has been studied in~\cite{LiRR14}.

\paragraph{\it Measurement of succinctness.}
A traditional question studied in finite model theory asks whether or not a graph
property of interest is expressible in a certain logical formalism.
If the expressibility is known, it is reasonable to ask how succinctly the property
can be expressed with respect to the length, the quantifier depth,
or the variable width of a defining sentence, and the questions we address
in this paper are exactly of this kind. We refer the reader to
Dawar \cite{Dawar98}, Grohe and Schweikardt \cite{GroheS05}, and
Tur{\'{a}}n \cite{Turan84}
for instances of the problems studied in this line of research.

\paragraph{Organization of the paper.}
Section \ref{s:prel} contains definitions and preliminary lemmas.
The bounds \refeq{Dspa} and \refeq{Wspa} for sparkler graphs are proved
in Section \ref{s:sparklers}. Our main result, the general bounds
\refeq{D}--\refeq{W}, is obtained in Section \ref{s:general}.
Furthermore, Section \ref{s:exact} is devoted to the particular case
of path and star graphs, in which we are able to determine
the values of $D'(F)$ and $W'(F)$ precisely. We conclude with
a discussion of further questions in Section~\ref{s:concl}.

\section{Preliminaries}\label{s:prel}

\subsection{Basic definitions}\label{ss:logic}

We consider the first-order language containing the adjacency and the equality relations.
We say that a first-order sentence $\Phi$ defines a class of graphs \classc
\emph{asymptotically over connected graphs} if there is an integer $N$
such that
$$
G\models\Phi\text{ iff }G\in\classc
$$
for all connected graphs $G$ with at least $N$ vertices.
The \emph{asymptotic logical depth} of \classc,
denoted by $D'(\classc)$, is the minimum quantifier depth (rank) of such~$\Phi$.

The \emph{variable width} of a first-order sentence $\Phi$ is the number of first-order variables
used to build $\Phi$; different occurrences of the same variable do not count.
The \emph{asymptotic logical width} of \classc,
denoted by $W'(\classc)$, is the minimum variable width of a $\Phi$
defining \classc asymptotically over connected graphs.
Note that
$
W'(\classc)\le D'(\classc)
$.

\subsection{Our toolkit}

Given two non-isomorphic graphs $G$ and $H$, let
$D(G,H)$ (resp.\ $W(G,H)$) denote the minimum quantifier depth (resp.\ variable width) of a sentence
distinguishing $G$ and $H$, that is, true on one of the graphs and false on the other.

\begin{lemma}\label{lem:DD}
\mbox{}

  \begin{enumerate}[label=\normalfont\bfseries\arabic*.]
\item
$D'(\classc)\le d$ if $D(G,H)\le d$ for all sufficiently large connected 
graphs $G\in\classc$ and $H\notin\classc$.
\item
$W'(\classc)\ge d$ if there are arbitrarily large connected graphs $G\in\classc$ and $H\notin\classc$
with $W(G,H)\ge d$.
  \end{enumerate}
\end{lemma}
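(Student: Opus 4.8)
The plan is to relate the asymptotic parameters $D'(\classc)$ and $W'(\classc)$ to the pairwise distinguishing parameters $D(G,H)$ and $W(G,H)$ via the standard Ehrenfeucht--Fra\"iss\'e correspondence. Recall that $D(G,H)\le d$ exactly when Spoiler wins the $d$-round EF game on $(G,H)$, and $W(G,H)\le k$ exactly when Spoiler wins the $k$-pebble game (with an unbounded number of moves) on $(G,H)$; I will use these characterizations freely.

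For part~1, suppose $D(G,H)\le d$ for all sufficiently large connected $G\in\classc$ and all $H\notin\classc$ (here ``sufficiently large'' should be read as: there is $N$ such that the bound holds once $|V(G)|\ge N$, with no size restriction on $H$). The first step is to invoke the finiteness, for each fixed $d$, of the set of $\equiv_d$-equivalence classes of finite graphs: there are only finitely many such classes, and each is definable by a single sentence $\varphi$ of quantifier depth $\le d$ (the Hintikka formula of the class). Now take $\Phi$ to be the disjunction of the Hintikka formulas $\varphi$ of all $\equiv_d$-classes that contain at least one connected graph in \classc of size $\ge N$. Then $\operatorname{qd}(\Phi)\le d$, and I claim $\Phi$ defines \classc asymptotically over connected graphs, with threshold $N$. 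Indeed, if $G$ is connected with $|V(G)|\ge N$ and $G\in\classc$, then $G$'s own class is one of the disjuncts, so $G\models\Phi$. Conversely, if such a $G$ satisfies $\Phi$, it lies in the same $\equiv_d$-class as some connected $G'\in\classc$ with $|V(G')|\ge N$; were $G\notin\classc$, the hypothesis would give $D(G',G)\le d$, a contradiction with $G\equiv_d G'$. Hence $G\in\classc$, which gives $D'(\classc)\le d$.

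For part~2, suppose there are arbitrarily large connected $G_n\in\classc$ and $H_n\notin\classc$ with $W(G_n,H_n)\ge d$. Let $\Phi$ be any sentence defining \classc asymptotically over connected graphs, with threshold $N$, and let $k=W(\Phi)$ be its variable width. Choose $n$ large enough that $|V(G_n)|\ge N$; note this forces $|V(H_n)|\ge d-1$ as well, since otherwise $W(G_n,H_n)\le \max(|V(G_n)|,|V(H_n)|)$ could not exceed... more simply, we may also pass to a subsequence on which $|V(H_n)|\to\infty$, which is harmless because $W(G_n,H_n)\ge d$ already implies $|V(H_n)|$ cannot stay bounded below $d$. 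Actually it suffices to note: whether or not $H_n$ is connected, $\Phi$ need not agree with membership in \classc on $H_n$ unless we arrange connectivity; but $W(G_n,H_n)$ being large is what matters. The key point is that $\Phi$ distinguishes $G_n$ from $H_n$: $G_n\in\classc$ and $|V(G_n)|\ge N$ and $G_n$ connected give $G_n\models\Phi$, while $H_n\notin\classc$ — and here I additionally require the family to be chosen with $H_n$ connected and $|V(H_n)|\ge N$, which the hypothesis permits since the $H_n$ are ``arbitrarily large connected'' — give $H_n\not\models\Phi$. Hence $W(G_n,H_n)\le W(\Phi)=k$, so $d\le k=W(\Phi)$. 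As $\Phi$ was arbitrary, $W'(\classc)\ge d$.

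The routine content is entirely standard; the only genuine subtlety, and the step I would be most careful about, is the bookkeeping around the word ``sufficiently large'': one must make sure that in part~1 the threshold $N$ for the new sentence $\Phi$ can be taken to be the same as (or derived uniformly from) the threshold in the hypothesis, and that in part~2 the adversarial pairs $(G_n,H_n)$ can simultaneously be taken connected, arbitrarily large \emph{in both coordinates}, and with $W(G_n,H_n)\ge d$ — which is exactly how the hypothesis of part~2 is phrased. Once those quantifiers are lined up correctly, both parts follow immediately from the Hintikka normal form (for depth) and the pebble-game characterization (for width). I do not expect any real obstacle beyond this quantifier management.
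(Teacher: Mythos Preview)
Your argument is correct and is exactly the standard unpacking of the paper's two-sentence sketch: Part~1 via the finiteness of $\equiv_d$-classes (Hintikka formulas), Part~2 directly from the definition of $W'(\classc)$. One small quibble: your parenthetical ``no size restriction on $H$'' misreads the hypothesis---the paper means both $G$ and $H$ are sufficiently large and connected---but your actual verification of the converse direction in Part~1 already uses that $G$ (playing the role of $H$ in the hypothesis) is connected with $|V(G)|\ge N$, so the proof goes through unchanged under the intended reading; the digression in Part~2 about bounding $|V(H_n)|$ is likewise unnecessary once you simply invoke that the hypothesis supplies $H_n$ connected and arbitrarily large.
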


\noindent
Part 1 of this lemma can easily be deduced from the fact that, over a given vocabulary,
there are only finitely many pairwise inequivalent first-order formulas of a fixed
quantifier depth. Part 2 follows directly from the definitions.

Lemma \ref{lem:DD} reduces estimating $D'(\classc)$ and $W'(\classc)$ to estimating,
respectively, the parameters $D(G,H)$ and $W(G,H)$ over large connected $G\in\classc$ and $H\notin\classc$.
For estimating $D(G,H)$ and $W(G,H)$ we have a very handy instrument.

The \emph{$k$-pebble Ehrenfeucht-Fra{\"\i}ss{\'e} game}
is played on two vertex-disjoint graphs $G$ and $H$.
This is a two-person game; the players are called \emph{Spoiler} and \emph{Duplicator}.
\emph{He} and \emph{she} have equal sets of $k$ pairwise different pebbles.
In each round, Spoiler takes a pebble and puts it on a vertex in $G$ or in $H$;
then Duplicator has to put her copy of this pebble on a vertex
of the other graph.
The pebbles can be reused and change their positions during the play.
Duplicator's objective is to ensure that the pebbling determines a partial
isomorphism between $G$ and $H$ after each round; when she fails, she immediately loses.
The proof of the following facts can be found in Immerman's textbook~\cite{Immerman-book}.

\begin{lemma}\label{lem:ehr}
\mbox{}

\begin{enumerate}[label=\normalfont\bfseries\arabic*.]
\item
 $D(G,H)$ is equal to the
minimum $k$ such that Spoiler has a winning strategy in the $k$-round $k$-pebble
game on $G$ and~$H$.
\item
 $W(G,H)$ is equal to the
minimum $k$ such that, for some $d$, Spoiler has a winning strategy in the $d$-round $k$-pebble
game on $G$ and~$H$.
\end{enumerate}
\end{lemma}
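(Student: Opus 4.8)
The plan is to derive both parts from the Ehrenfeucht--Fra{\"\i}ss{\'e} correspondence between first-order definability and pebble games, in its classical form for Part~1 and in its bounded-variable refinement for Part~2; both are worked out in detail in~\cite{Immerman-book}. For a graph $G$ with a tuple of distinguished vertices $\bar a=(a_1,\dots,a_m)$ and a graph $H$ with $\bar b=(b_1,\dots,b_m)$, write $(G,\bar a)\equiv_d(H,\bar b)$ if the two structures satisfy exactly the same first-order formulas of quantifier depth at most $d$ in the free variables $x_1,\dots,x_m$, and write $(G,\bar a)\equiv^k_d(H,\bar b)$ for the analogous relation restricted to formulas that use at most $k$ distinct variables, reuse permitted, where now $m\le k$. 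I would show that $\equiv_d$ is captured by the $d$-round $d$-pebble game and $\equiv^k_d$ by the $d$-round $k$-pebble game, and then read off the two identities.

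The core is an induction on $d$ proving that Duplicator wins the $d$-round game from the position pebbling $\bar a$ in $G$ against $\bar b$ in $H$ if and only if $(G,\bar a)\equiv_d(H,\bar b)$, and likewise in the $k$-pebble version. The base case $d=0$ merely states that $a_i\mapsto b_i$ is a partial isomorphism, which is exactly Duplicator's survival condition before any round is played. The inductive step rests on the observation that $(G,\bar a)\equiv_{d+1}(H,\bar b)$ holds precisely when for every $a\in V(G)$ there is some $b\in V(H)$ with $(G,\bar a a)\equiv_d(H,\bar b b)$, and symmetrically with $G$ and $H$ interchanged; this back-and-forth condition is literally the assertion that Duplicator has an answer to Spoiler's first move after which she wins the remaining $d$-round game. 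In the $k$-pebble version the only new phenomenon is that, once all $k$ pebbles are on the board, Spoiler must lift one before replacing it, which on the logical side amounts to re-quantifying one of the $k$ variables over its current value; the induction is otherwise unchanged.

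The step that requires genuine care --- and which I expect to be the main technical obstacle --- is to replace ``satisfies the same depth-$d$ formulas'' by a single formula. Over the finite vocabulary $\{\sim,=\}$ there are, for each $d$ and $m$, only finitely many inequivalent formulas of quantifier depth $\le d$ in $x_1,\dots,x_m$, so each $\equiv_d$-class is definable by a single \emph{characteristic formula} $\chi^d_{G,\bar a}$ of quantifier depth~$d$: one sets $\chi^0_{G,\bar a}$ to be the conjunction of all atomic and negated atomic facts about $\bar a$ and puts $\chi^{d+1}_{G,\bar a}=\bigl(\bigwedge_{a\in V(G)}\E x_{m+1}\,\chi^d_{G,\bar a a}\bigr)\wedge\bigl(\A x_{m+1}\bigvee_{a\in V(G)}\chi^d_{G,\bar a a}\bigr)$, the conjunction and disjunction being finite because there are finitely many classes. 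One then checks that $H,\bar b\models\chi^d_{G,\bar a}$ iff $(G,\bar a)\equiv_d(H,\bar b)$. The same recipe, re-quantifying over the variable carried by the moved pebble once $m=k$, yields characteristic formulas $\chi^{d,k}_{G,\bar a}$ of variable width~$k$ for the classes of $\equiv^k_d$.

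With these characteristic formulas at hand, both identities follow quickly. For Part~1: if Spoiler wins the $d$-round $d$-pebble game (equivalently, Duplicator loses it, the game being of finite length and perfect information, hence determined), then $G\not\equiv_d H$ as sentences, so the sentence $\chi^d_G$ of quantifier depth $d$ separates $G$ from $H$ and $D(G,H)\le d$; conversely, a separating sentence of depth $d$ witnesses $G\not\equiv_d H$, yielding a Duplicator-losing position and hence a Spoiler win in the $d$-round game, which needs only $d$ pebbles since no pebble is reused in $d$ rounds. Taking the least such $d$ gives the stated value of $D(G,H)$. For Part~2: $W(G,H)\le k$ means $G$ and $H$ are separated by a sentence of variable width $k$, which has some quantifier depth $d$ and hence witnesses $G\not\equiv^k_d H$, so Spoiler wins the $d$-round $k$-pebble game; conversely, a Spoiler win in the $d$-round $k$-pebble game makes $\chi^{d,k}_G$ a width-$k$ sentence separating $G$ from $H$, whence $W(G,H)\le k$. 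Therefore $W(G,H)$ is the least $k$ for which some such $d$ exists, as asserted.
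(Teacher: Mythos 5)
Your proposal is correct and reproduces the standard Ehrenfeucht--Fra{\"\i}ss{\'e} correspondence via characteristic (Hintikka) formulas, together with the bounded-variable refinement; this is precisely the argument to which the paper defers by citing Immerman's textbook rather than proving the lemma in the text.
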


For bounding $W'(\classc)$ from below based on Lemma \ref{lem:DD}, we need
a supply of pairs of graphs $G,H$ with large $W(G,H)$.
The following simple construction works sometimes quite efficiently.

Let $u$, $v$, and $w$ be three pairwise different vertices of a graph.
We say that $w$ \emph{separates} $u$ and $v$ if $w$ is adjacent
to exactly one of these two vertices. We call $u$ and $v$ \emph{twins}
if these vertices are inseparable by any vertex $w$ or, equivalently,
if the transposition $(uv)$ is an automorphism of the graph.
We consider $u$ and $v$ twins also if $u=v$.
Being twins is an equivalence relation on the vertex set, and
every equivalence class, called \emph{twin class}, is a clique or an independent set,
that is, a \emph{homogeneous set}.
For a vertex $v$ in a graph $G$, let $G-v$ denote the graph obtained
by removing $v$ from~$G$.

\begin{lemma}\label{lem:rm1twin}
  Suppose that $G$ is a graph containing a twin class of size $t$.
Let $v$ be one of these $t$ twins. Then
$W(G,G-v)\ge t$.
\end{lemma}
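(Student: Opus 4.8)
The plan is to use the Ehrenfeucht--Fra{\"\i}ss{\'e} characterisation of the parameter $W$ (Lemma~\ref{lem:ehr}, Part~2) and to show that Duplicator has a strategy that wins the $(t-1)$-pebble game on $G$ and $G-v$ for an arbitrary number of rounds. Since $G$ and $G-v$ have different numbers of vertices, they are non-isomorphic, so the existence of such a strategy yields $W(G,G-v)>t-1$, i.e.\ $W(G,G-v)\ge t$. Let $T$ be the given twin class of $G$ of size $t$ with $v\in T$, and put $T'=T\setminus\{v\}$. Deleting $v$ does not change whether a third vertex separates two surviving vertices, so $T'$ is a twin class of $G-v$ of size $t-1$; moreover $T$ and $T'$ are homogeneous sets with exactly the same neighbours outside~$T$.

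The core of the argument is the invariant that Duplicator maintains after each round. Writing $g_i$ and $h_i$ for the positions of pebble $i$ in $G$ and in $G-v$ (for pebbles currently on the board), the invariant reads: (i) $h_i=g_i$ whenever $g_i\notin T$; (ii) $h_i\in T'$ whenever $g_i\in T$; and (iii) the partial map $g_i\mapsto h_i$ is well defined and injective, i.e.\ $g_i=g_j$ iff $h_i=h_j$. I would first check that this invariant forces the pebbling to be a partial isomorphism. Two pebbles both outside $T$ are handled by (i); a pebble on some $x\in T$ and a pebble on $y\notin T$ are adjacent iff $y$ is adjacent to $T$, which by homogeneity (using $T'\subseteq T$) happens iff the corresponding images are adjacent; and two pebbles inside $T$ are adjacent iff they sit on distinct vertices and $T$ is a clique, which by (iii) and homogeneity exactly mirrors the behaviour of their images inside $T'$.

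It then remains to verify that Duplicator can always maintain the invariant, and this is where the counting hypothesis enters. If Spoiler places (or moves) pebble $i$ onto a vertex lying outside $T$ in $G$, or outside $T'$ in $G-v$, Duplicator copies the move onto the identical vertex of the other graph; conditions (i) and (iii) are then routine to re-verify. If Spoiler plays pebble $i$ on a vertex of $T$ in $G$ (symmetrically, on a vertex of $T'$ in $G-v$), Duplicator answers inside $T'$ (symmetrically, inside $T$): if that vertex is already occupied by some other pebble she matches its image; otherwise she must pick an image avoiding the images of the at most $t-2$ other pebbles currently sitting in the twin class, and such a choice exists because $|T'|=t-1>t-2$ (symmetrically $|T|=t>t-2$). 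This pigeonhole step is the only place where the bound ``$k=t-1$'' is used, and it is the sole point where Duplicator's response could a priori fail; everything else is bookkeeping. Applying Lemma~\ref{lem:ehr} then completes the proof.
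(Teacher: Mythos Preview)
Your proof is correct and follows essentially the same approach as the paper's: both invoke Part~2 of Lemma~\ref{lem:ehr} and describe the obvious Duplicator strategy of mirroring moves outside the twin class and staying inside the twin class otherwise, with the pigeonhole observation that $|T'|=t-1$ suffices against $t-1$ pebbles. Your version is simply more explicit, spelling out the invariant and verifying the partial-isomorphism condition case by case, whereas the paper compresses this into the remark that ``the particular choice of a twin is immaterial.''
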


\begin{proof}
  By part 2 of Lemma \ref{lem:ehr}, it suffices to notice that Duplicator
survives in the $(t-1)$-pebble game on $G$ and $G-v$, whatever the number of
rounds is played. Whenever Spoiler plays outside the twin class under consideration,
Duplicator just mirrors his moves in the other graph. Whenever Spoiler pebbles
one of the twins, Duplicator makes the same in the other graph;
the particular choice of a twin is immaterial. This is always possible
because the reduced twin class in $G-v$ contains as many vertices as the
number of pebbles.
\end{proof}

Let $\sigma(G)$ denote the maximum size of a twin class in a graph $G$.
Lemma \ref{lem:rm1twin} implies that we need at least $\sigma(G)$ first-order variables
in order to define $G$ as an individual graph, i.e., to distinguish $G$
from all non-isomorphic graphs. As a consequence, the quantifier depth
needed for this purpose cannot be smaller than $\sigma(G)$.
It turns out that this is the only reason why it can be large.
More specifically, the following result shows that every $n$-vertex graph $G$
either is definable with quantifier depth no larger than $\frac12\,n+\frac52$
or has $\sigma(G)>\frac12\,n+\frac12$, and in the latter case
the minimum quantifier depth of a sentence defining $G$ is very close to~$\sigma(G)$.

\begin{lemma}[{Pikhurko, Veith, and Verbitsky \cite[Theorem 4.1]{PikhurkoVV06}}]\label{lem:PVV}
If $A$ and $B$ are non-isomorphic graphs, then
$$
D(A,B)\le
\begin{cases}
\frac12\,v(A)+\frac52&\text{if }\sigma(A)\le\frac12\,v(A)+\frac12,\\
\sigma(A)+2&\text{if }\sigma(A)\ge\frac12\,v(A)+\frac12.
\end{cases}
$$
Moreover, if $\sigma(A)\ge\frac12\,v(A)+1$, then
$D(A,B)\le\sigma(A)+1$ whenever the largest twin class in $A$ is an inclusion-maximal homogeneous set.
\end{lemma}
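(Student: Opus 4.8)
The natural route is the game characterisation of part~1 of Lemma~\ref{lem:ehr}: to show $D(A,B)\le d$ it suffices to hand Spoiler a winning strategy in the $d$-round $d$-pebble Ehrenfeucht-Fra{\"\i}ss{\'e} game on $A$ and $B$. Since the bound depends only on $A$, one should picture Spoiler as \emph{defining} $A$, playing mostly inside the reference graph $A$. The plan splits into a combinatorial part, which extracts from $A$ a small set of vertices whose pebbling almost pins $A$ down, and a game part, which shows how Spoiler finishes once such a set is on the board.

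For the combinatorial part I would work with a \emph{distinguishing set}: a set $S\subseteq V(A)$ such that any two distinct vertices outside $S$ with the same neighbourhood inside $S$ are twins of $A$. Once $S$ is pebbled, every vertex of $A$ is determined by its adjacency pattern over $S$ up to the multiplicity of its twin class, so a non-isomorphic $B$ must betray itself either by realizing a pattern over $S$ with a different multiplicity (a multiplicity zero on one side being the extreme case) or in the adjacencies among still-unpebbled vertices of distinct pattern classes. The crux is the claim that $A$ has a distinguishing set $S$ with $|S|\le\frac12\,v(A)+\frac12$ whenever $\sigma(A)\le\frac12\,v(A)+\frac12$. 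I would prove this by greedy refinement: start with $S=\emptyset$ and, while non-twin vertices $u,v\notin S$ share their $S$-neighbourhood, add to $S$ a vertex separating them (necessarily one outside the current $S$); then bound the number of steps through a potential/counting argument that plays $|S|$ off against the number of pattern classes outside $S$, using $\sigma(A)$ to keep those classes small --- this is the step where a handful of ``wasteful'' steps have to be accounted for and where the additive constant $\frac52$ originates. When $\sigma(A)>\frac12\,v(A)+\frac12$ the maximum twin class $T$ occupies more than half of $V(A)$, so $|V(A)\setminus T|<|T|$; here the bound $\sigma(A)+2$ is forced up by the need to commit pebbles to essentially all of $T$, and Lemma~\ref{lem:rm1twin} shows that $\sigma(A)$ of them are unavoidable.

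With a distinguishing set $S$ in hand, Spoiler's strategy is: pebble the vertices of $S$ in $A$; if Duplicator's responses $S'$ do not form a partial isomorphism, he has already won. Otherwise, because $A\not\cong B$, there is a pattern over $S$ realized by unequal numbers of vertices in $A$ and $B$; Spoiler moves into the graph realizing it more often and pebbles all its vertices of that pattern. In $A$ those vertices are pairwise twins, so there are at most $\sigma(A)$ of them, and this phase costs at most $\sigma(A)$ further pebbles while leaving Duplicator with no legal move (the residual cases, where the discrepancy is an adjacency between two unpebbled vertices, are handled similarly with a bounded number of extra moves). The delicate point --- which I expect to be the main obstacle along with the combinatorial claim --- is to keep the grand total within $\max\bigl(\frac12\,v(A)+\frac52,\ \sigma(A)+2\bigr)$: Spoiler must reuse the pebbles sitting on the part of $S$ that is irrelevant to the offending pattern, and one has to verify that lifting them never destroys the partial isomorphism before the decisive move, while trimming the additive constants to their stated values. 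The improvement to $\sigma(A)+1$ when the largest twin class is an inclusion-maximal homogeneous set comes from the endgame against that class: maximality is a property Spoiler can compel Duplicator to expose without spending a dedicated pebble to anchor the class or to certify its neighbourhood, which saves exactly one pebble over the generic estimate.
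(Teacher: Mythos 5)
The paper does not prove this lemma; it quotes it verbatim from Pikhurko, Veith, and Verbitsky \cite[Theorem 4.1]{PikhurkoVV06}, so there is no internal proof to compare against. Judged on its own terms, your sketch has the right high-level shape --- invoke the game characterisation of Lemma~\ref{lem:ehr}, have Spoiler pebble a set of vertices of $A$ that resolves $A$ up to twins, then exploit a discrepancy --- but it leaves the two hard parts of the theorem unestablished, and one step as written is actually wrong.

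The wrong step is the endgame accounting. You have Spoiler pebble a distinguishing set $S$ (at most $\frac12\,v(A)+\frac12$ rounds) and then, to exploit a pattern whose multiplicity differs between $A$ and $B$, pebble all vertices of that pattern in one graph, conceding this may cost another $\sigma(A)$ rounds; you propose to ``reuse the pebbles sitting on the part of $S$ that is irrelevant'' to stay within budget. But the quantity being bounded is $D(A,B)$, and by part~1 of Lemma~\ref{lem:ehr} that corresponds to the $k$-\emph{round} $k$-pebble game: lifting and replacing a pebble still spends a round. Pebble reuse only economises on the width parameter $W(A,B)$ (part~2 of Lemma~\ref{lem:ehr}), where the number of rounds is unconstrained. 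So the total number of rounds in the strategy you describe can be as large as $|S|+\sigma(A)$, which when $\sigma(A)$ is near $\frac12\,v(A)$ is about $v(A)$, far above the claimed $\frac12\,v(A)+\frac52$. Fixing this, and proving the existence of a distinguishing set of size at most $\frac12\,v(A)+\frac12$ whenever $\sigma(A)\le\frac12\,v(A)+\frac12$ (which you assert via a ``greedy refinement'' but do not actually establish; Lemma~\ref{lem:rm1twin}, which you invoke for the $\sigma(A)+2$ case, gives a lower bound, not an upper bound), are exactly the substantive content of the cited theorem, and the $\sigma(A)+1$ refinement for inclusion-maximal homogeneous twin classes is likewise asserted rather than argued. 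A genuine proof would have to reproduce the machinery of~\cite{PikhurkoVV06}; what you have is a plausible opening plan, not a proof.
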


\noindent
Here and throughout the paper, $v(G)$ denotes the number of vertices in a graph~$G$.

\subsection{Useful graphs}

The \emph{neighborhood} $N(v)$ of a vertex $v$ consists of
all vertices adjacent to $v$. The number of neighbors $|N(v)|$
is called the \emph{degree} of $v$ and denoted by $\deg v$.
The vertex of degree $v(G)-1$ (i.e., adjacent to all other vertices)
is called \emph{universal}.
The vertex of degree 1 is called \emph{pendant}.

We use the standard notation
$P_n$ for paths and $C_n$ for cycles on $n$ vertices.
Furthermore, $K_{t,s}$ denotes the complete bipartite graph whose
vertex classes have $t$ and $s$ vertices. In particular, $K_{1,n-1}$
is the star graph on $n$ vertices.

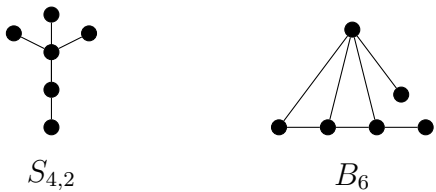
\begin{figure}
  \centering
\begin{tikzpicture}[every node/.style={circle,draw,inner sep=2pt,fill=black},scale=.5]

  \begin{scope}
    \path (0,0) node (a1) {}
      (0,1) node (a2) {} edge (a1)
      (0,2) node (a3) {} edge (a2)
      (0,3) node (a4) {} edge (a2)
   (-1,2.5) node (a5) {} edge (a3)
    (1,2.5) node (a6) {} edge (a3);
\node[draw=none,fill=none] at (0,-1.275) {$S_{4,2}$};
  \end{scope}

  \begin{scope}[xshift=80mm,scale=1.3]
    \path (-1.5,0) node (a1) {}
      (-.5,0) node (a2) {} edge (a1)
      (.5,0) node (a3) {} edge (a2)
      (1.5,0) node (a4) {} edge (a3)
   (0,2) node (a) {} edge (a1) edge (a2) edge (a3)
    (1,.67) node (y) {} edge (a);
\node[draw=none,fill=none] at (0,-1) {$B_6$};
  \end{scope}

\end{tikzpicture}
  \caption{Examples of useful graphs: a sparkler and a broken fan.} 
  \label{fig:special-graphs}
\end{figure}

The \emph{sparkler graph} $S_{q,p}$, playing an important role in the paper, 
is obtained from $K_{1,q-1}$ and $P_p$
by adding an edge between an end vertex of $P_p$ and the central vertex of~$K_{1,q-1}$.
The components $K_{1,q-1}$ and $P_p$ are referred to as the \emph{star} and \emph{tail parts}
of the sparkler graph respectively.
By the \emph{central vertex} of $S_{q,p}$ we mean the central vertex of
the star part.
We will almost always assume that $q\ge3$ and $p\ge2$, though the smaller parameters
still make sense representing $S_{q,1}\cong K_{1,q}$ and $S_{2,p}\cong P_{p+2}$.

The following graph will appear in Section \ref{ss:upper} for a technical purpose.
Recall that the $n$-vertex \emph{fan} graph is obtained
by adding a universal vertex $y$ to the path graph $P_{n-1}$. 
Let us ``break'' the edge between $y$ and one of the end vertices of $P_{n-1}$
by removing this edge from the graph and replacing it with an edge from
$y$ to a new vertex $y'$. This
results in an $(n+1)$-vertex graph that we call \emph{broken fan} and denote by~$B_{n+1}$;
see Fig.~\ref{fig:special-graphs}.

\subsection{First bounds}

Let \subgr F denote the class of graphs containing a subgraph
isomorphic to $F$. Simplifying the general notation introduced in Subsection \ref{ss:logic}, we write
$D'(F)=D'(\subgr F)$ and $W'(F)=W'(\subgr F)$.
We now state simple combinatorial bounds for these parameters that
were used already in~\cite{VZh16}.

Let $v_0v_1\ldots v_t$ be an induced path in a graph $G$.
We call it \emph{pendant} if $\deg v_0\ne2$, $\deg v_t=1$ and $\deg v_i=2$ for all $1\le i<t$.
Furthermore, let $S$ be an induced star $K_{1,s}$ in $G$
with the central vertex $v_0$. We call $S$ \emph{pendant}
if all its pendant vertices are pendant also in $G$,
and in $G$ there is no more than $s$ pendant vertices adjacent to $v_0$.
The definition ensures that a pendant path (or star) cannot be
contained in a larger pendant path (or star). 

Let $p(F)$ denote the maximum $t$ such that $F$ has a pendant path~$P_{t+1}$.
Similarly, let $s(F)$ denote the maximum $s$ such that $F$ has a pendant star~$K_{1,s}$.
As an example,
note that the sparkler graph $S_{q,p}$ has a pendant $P_{p+1}$
and a pendant~$K_{1,q-1}$ and, therefore, $p(S_{q,p})=p$ and $s(S_{q,p})=q-1$.
If $F$ has no pendant vertex, then we set $p(F)=0$ and $s(F)=0$.

\begin{lemma}\label{lem:sp-lower}
For every connected graph $F$ with $\ell$ vertices,
\begin{enumerate}[label=\normalfont\bfseries\arabic*.]
\item
$W'(F)\ge \ell-s(F)-1$, and
\item
$W'(F)\ge \ell-p(F)-1$. 
\end{enumerate}
\end{lemma}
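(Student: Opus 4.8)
The plan is to use the Ehrenfeucht–Fra\"iss\'e game characterization together with Lemma~\ref{lem:DD} and Lemma~\ref{lem:rm1twin}. By part~2 of Lemma~\ref{lem:DD}, it suffices to exhibit, for arbitrarily large values of a size parameter, a connected graph $G\in\subgr F$ and a connected graph $H\notin\subgr F$ with $W(G,H)\ge\ell-s(F)-1$ (for item~1) and similarly with $W(G,H)\ge\ell-p(F)-1$ (for item~2). The natural candidates come from gluing a long gadget onto a pendant feature of $F$ so that one graph still contains $F$ while a slightly smaller variant does not, and so that the two graphs differ only in a large twin class; then Lemma~\ref{lem:rm1twin} gives the variable-width bound directly.

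For item~1, let $v_0$ be the center of a pendant star $K_{1,s}$ in $F$, where $s=s(F)$. Write $F=F'\cup K_{1,s}$ where $F'$ is obtained by deleting the $s$ pendant leaves at $v_0$ (but keeping $v_0$), so $v(F')=\ell-s$. The idea is to form $G$ by taking $F'$ and attaching at $v_0$ a large homogeneous set of new pendant vertices — say $t=\ell-s-1$ of them — yielding a twin class of size $t$. Then $G$ contains $F$ as a subgraph, since $t\ge s$ for $\ell$ large enough relative to $s$ (we may assume $\ell>2s+1$; the remaining small cases are handled by the trivial inequality $W'(F)\ge\sigma$-type arguments or folded into the statement's slack). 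One must also ensure $G$ is connected, which it is since $F'$ is connected (being $F$ minus pendant leaves, a connected graph minus some leaves at a single vertex) and we only add pendant vertices. For $H$ we take $G-v$ for one of the $t$ twins $v$; by Lemma~\ref{lem:rm1twin}, $W(G,H)\ge t=\ell-s-1$. The nontrivial point is to verify $H\notin\subgr F$: we need that after removing one twin, the graph $G-v$ genuinely contains no copy of $F$ — this is where we must exploit the \emph{pendant} hypothesis on the star of $F$, namely that in $F$ there are at most $s$ pendant vertices adjacent to $v_0$, and the definition that a pendant star is not contained in a larger pendant star; one argues that any embedding of $F$ into $G-v$ would have to map the star part onto the unique high-multiplicity pendant cluster, forcing a contradiction with the count.

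For item~2 the construction is analogous but uses a pendant path: let $v_0v_1\ldots v_t$ with $t=p(F)$ be a pendant path of $F$, with $\deg v_0\ne 2$. Here, however, a path has no large twin class, so instead we must attach at $v_0$ a large \emph{bundle of parallel pendant paths} of the appropriate length (or, more economically, a structure whose automorphism group permutes $\ell-p(F)-1$ equivalent vertices) and again remove one twin to obtain $H$; the embedding analysis showing $H\notin\subgr F$ again rests on the fact that a pendant path of $F$ cannot be extended to a longer pendant path, so the unique long pendant path of $G$ is the only possible image of $v_0\ldots v_t$, and deleting a twin destroys it.

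The main obstacle I anticipate is not the game-theoretic lower bound (Lemma~\ref{lem:rm1twin} does that cleanly) but the combinatorial bookkeeping: designing the gadget so that simultaneously (i) $G$ is connected and genuinely contains $F$, (ii) the gadget contributes a twin class of size exactly $\ell-s(F)-1$ (resp.\ an orbit of that size for the path case), and (iii) removing a single gadget vertex provably eliminates \emph{every} embedding of $F$, using precisely the maximality built into the definitions of pendant path and pendant star. Getting constant (iii) right — ruling out unexpected embeddings that route through the gadget in a clever way — is the delicate part, and it is exactly where the hypotheses ``$\deg v_t=1$, $\deg v_i=2$ for $1\le i<t$'' and ``no more than $s$ pendant vertices adjacent to $v_0$'' must be invoked. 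The small-$\ell$ boundary cases (where $s(F)$ or $p(F)$ is close to $\ell$) should be dispatched separately, noting that then the claimed bound is weak and follows from elementary considerations.
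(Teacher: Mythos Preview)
Your high-level plan---invoke Part~2 of Lemma~\ref{lem:DD} and produce the width bound via Lemma~\ref{lem:rm1twin}---is exactly right, but the concrete construction does not work. For item~1 you place the twin class among the pendant leaves attached at $v_0$, taking $t=\ell-s-1$ of them. But then the two requirements on $G$ and $H=G-v$ are incompatible: for $G$ to contain $F$ you need $t\ge s$, while for $H$ (which still has $F'$ together with $t-1$ leaves at $v_0$) to avoid $F$ you need $t-1<s$. Both hold only when $t=s$, giving merely $W(G,H)\ge s$, not $\ell-s-1$. Moreover, your $G$ has fixed size $2\ell-s-1$; there is no parameter tending to infinity, so Lemma~\ref{lem:DD} does not apply. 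For item~2 the situation is worse: a bundle of parallel pendant paths attached at $v_0$ has no twin class of the required size (the only twins are the first vertices of the paths, and removing one of those does not kill the embedding).

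The missing idea is to put the twin class in the \emph{body} of the host, not in the pendant feature. The paper takes $G_n$ to be the complete graph $K_{\ell-s(F)}$ with a pendant star $K_{1,n}$ (for item~1), respectively $K_{\ell-p(F)}$ with a pendant path $P_n$ (for item~2), and $H_n$ the same with the clique one smaller. The $\ell-s(F)-1$ (resp.\ $\ell-p(F)-1$) clique vertices away from the attachment point form the twin class, and $n$ is the free size parameter. Now $G_n\supseteq F$ is immediate (embed $F'$ into the clique with $v_0$ at the attachment vertex), and $H_n\notin\subgr F$ follows because any embedding of $F$ can send at most $s(F)$ (resp.\ $p(F)$) vertices into the pendant gadget---this is where the maximality in the definitions of $s(F)$ and $p(F)$ is used---forcing $\ell-s(F)$ vertices into a clique of size $\ell-s(F)-1$. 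No delicate case analysis is needed.
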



\begin{proof}
\textbf{1.}
Since $s(K_{1,\ell-1})=\ell-1$, the bound is trivial for $F=K_{1,\ell-1}$,
and we assume that $F$ is not a star. By Part 2 of Lemma \ref{lem:DD},
it suffices to exhibit, for every sufficiently large $n$, a pair of graphs
$G_n$ and $H_n$ with at least $n$ vertices each such that $G_n$ contains $F$
as a subgraph, $H_n$ does not, and $W(G_n,H_n)\ge \ell-s(F)-1$.
For this purpose, consider $G_n$ consisting of the complete graph $K_{\ell-s(F)}$ with a pendant
star $K_{1,n}$ and $H_n$ obtained in the same way from the smaller complete
graph $K_{\ell-s(F)-1}$. Note that $H_n$ is obtainable from $G_n$ by removing
one of $\ell-s(F)-1$ twins. The lower bound for $W(G_n,H_n)$ follows by Lemma~\ref{lem:rm1twin}.

\textbf{2.}
Assume that $F$ is not a path graph, because otherwise the bound is trivial.
We proceed similarly to the first part, considering now $G_n$ consisting of $K_{\ell-s(F)}$ 
with a pendant path $P_{n}$ and $H_n$ obtained from $G_n$ by removing one of the twins.
\end{proof}

\section{Bounds for sparkler graphs}\label{s:sparklers}

We begin our analysis with a very instructive case of sparkler graphs,
for which we are able to determine the asymptotic depth and width parameters
with high precision.

\begin{theorem}\label{thm:sparkler}
Let $q\ge3$ and $p\ge2$.
Let $\ell=q+p$ denote the number of vertices of the sparkler graph~$S_{q,p}$.

\begin{enumerate}[label=\normalfont\bfseries\arabic*.]
\item 
$W'(S_{q,p})\ge\max\left(p,\,\ell-\frac12\,p-2-\frac12(p\bmod 2)\right)$.
\item 
$D'(S_{q,p})\le\max\left(p,\,\ell-\frac12\,p-2-\frac12(p\bmod 2)\right)+2$.
\end{enumerate}  
\end{theorem}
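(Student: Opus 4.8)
The two inequalities are tied together by $W'(S_{q,p})\le D'(S_{q,p})$ (noted after the definitions), so it suffices to prove the lower bound in Part~1 and the upper bound in Part~2. Write $t:=q+\lfloor p/2\rfloor-2$, which is exactly $\ell-\tfrac12 p-2-\tfrac12(p\bmod 2)$, and $M:=\max(p,t)$. One summand of Part~1 is free: since $s(S_{q,p})=q-1$, Lemma~\ref{lem:sp-lower}.1 gives $W'(S_{q,p})\ge\ell-(q-1)-1=p$. It remains to establish $W'(S_{q,p})\ge t$ and $D'(S_{q,p})\le M+2$.

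\emph{Lower bound $W'(S_{q,p})\ge t$.} By Lemma~\ref{lem:DD}.2 with Lemma~\ref{lem:rm1twin} it is enough to produce, for arbitrarily large $n$, a connected graph $G_n$ with at least $n$ vertices that contains $S_{q,p}$, has a twin class of size $t$, and becomes $S_{q,p}$-free after deleting one vertex $v$ of that class (then $H_n:=G_n-v$ works). The transparent regimes come first. If $q=3$, take $G_n=G(t,n)$, the clique $K_t$ completely joined to an independent set of size $n$; counting longest paths shows $G(s,n)$ contains $S_{3,p}$ exactly when $s\ge\lfloor p/2\rfloor+1=t$, so $G(t,n)\in\subgr{S_{3,p}}$ while $G(t-1,n)=G(t,n)-v$ (for $v$ in the clique) is $S_{3,p}$-free. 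If $p\le 3$, take $G_n=K_q$ with a pendant path $P_n$ attached at a clique vertex $q_1$; then $K_q\setminus\{q_1\}$ is a twin class of size $q-1=t$, $q_1$ serves as the centre and the path as the tail, and removing a twin leaves all vertices of degree $<q$, killing $S_{q,p}$. For the remaining case $p,q\ge4$ the twin class must do double duty: it provides the $q-1$ leaves at the centre together with $\lfloor p/2\rfloor-1$ of the ``even'' tail vertices, the tail being embedded by a zig-zag between the twin class and a constant-size fixed set; one builds such a core on $\ell$ vertices and then enlarges it to $n$ vertices by attaching a large $S_{q,p}$-free gadget (a long pendant path, or a big pendant star) at a carefully chosen vertex, checking by a case analysis over the location of a would-be centre and over all routings of the tail that deleting a twin always leaves the centre one leaf short.

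\emph{Upper bound $D'(S_{q,p})\le M+2$.} By Lemma~\ref{lem:DD}.1 it suffices to bound $D(G,H)\le M+2$ for all sufficiently large connected $G,H$ with $S_{q,p}\subseteq G$ and $S_{q,p}\not\subseteq H$; by Lemma~\ref{lem:ehr}.1 this asks for a Spoiler strategy with $M+2$ pebbles in $M+2$ rounds. Fix a copy of $S_{q,p}$ in $G$ with centre $c$, leaves $w_1,\dots,w_{q-1}$ and tail $u_1,\dots,u_p$. Spoiler first plays $c$ and all of $w_1,\dots,w_{q-1}$ simultaneously, $q$ pebbles, forcing Duplicator to pick a vertex $c'$ of $H$ with $q-1$ pairwise distinct neighbours; keeping those pebbles down he then walks $u_1,u_2,\dots$ away from $c'$, recycling the tail pebbles by a sliding window so that only about $\lfloor p/2\rfloor$ of them are live at once. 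Since $H$ has no copy of $S_{q,p}$, Duplicator must fail, and the maximum number of simultaneously live pebbles works out to $\max(q+\lfloor p/2\rfloor,\,p+2)=M+2$. To make ``Duplicator must fail'' rigorous — in particular to rule out her folding the walk back onto itself or onto a committed leaf — one needs structural control of $S_{q,p}$-free graphs near a prospective centre, supplied by the dichotomy (Lemma~\ref{lem:large-deg}, which I would prove first) that a connected $S_{q,p}$-free graph has maximum degree at most a constant or else a vertex of large degree: in the bounded-degree branch the relevant neighbourhood of $c'$ has bounded size, so Spoiler can exhaust all candidate tails within budget, and in the large-degree branch $H$ is, up to bounded local pieces, a single enormous star, which one finishes either by the same game or via the succinct-definability dichotomy of Lemma~\ref{lem:PVV}.

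\emph{The hard part.} The crux is the pebble accounting of Part~2: forcing the whole centre--leaves--tail configuration of $S_{q,p}$ with only $M+2$ pebbles instead of the obvious $\ell=q+p$, which demands that the sliding window along the tail be organised so Duplicator cannot shorten the walk by folding, and that the two branches of the degree dichotomy be handled uniformly. On the lower-bound side the analogous nuisance is verifying that the enlarging gadget attached to the zig-zag core (for $p,q\ge4$) introduces no accidental copy of $S_{q,p}$ once a twin has been removed, and that the twin-class size lands on exactly $t$ rather than $t-1$.
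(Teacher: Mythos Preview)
Your Part~1 outline is broadly in the right spirit but the general case is not actually done: for $p,q\ge4$ you promise a ``zig-zag'' core with a twin class of size exactly $t$ and an attached gadget, then defer everything to an unspecified case analysis. The paper's construction is quite different and uniform: it glues a complete bipartite graph $K_{a+1,a+s}$ (with $s=q-1$, $a=\lfloor p/2\rfloor-1$) to a sparkler $S_{n+1,b}$ at the central vertex, so the larger bipartite side is an independent twin class of size $a+s=t$; the verification that removing one twin kills every copy of $S_{q,p}$ is a short case split on where the centre could sit. Your special cases $q=3$ and $p\le3$ are fine (the latter is exactly Lemma~\ref{lem:sp-lower}.2, as in the paper), but the heart of the construction is missing.

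Part~2 has a genuine gap. You correctly note that Lemma~\ref{lem:ehr}.1 asks for a Spoiler win in $M+2$ \emph{rounds}, but your strategy does not achieve this: pebbling $c$ and the $q-1$ leaves costs $q$ rounds, and then ``walking $u_1,u_2,\dots$'' along the tail with a sliding window costs $p$ further rounds regardless of how few pebbles are simultaneously live. That totals $q+p=\ell$ rounds, which is the trivial bound. A sliding window controls the number of \emph{pebbles} (hence $W'$), not the number of rounds (hence $D'$); the sentence ``the maximum number of simultaneously live pebbles works out to $M+2$'' is a width statement, not a depth statement. Your reading of the degree dichotomy is also off: when $\Delta(H)<q$ Duplicator cannot even match the centre, so that branch is trivial; the other branch gives one vertex of large degree, not that $H$ is ``up to bounded local pieces a single enormous star''.

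The paper's argument for Part~2 is structurally different. Spoiler pebbles only the centre $x$; if Duplicator's reply $y$ has degree $<q$ he wins in $q+1$ rounds. Otherwise Lemma~\ref{lem:large-deg} forces $\Delta(H)$ large, which via an auxiliary lemma forbids $C_{p+1}$ and the broken fan $B_{p+2}$ in $H$. If pebbling the tail $x x_1\cdots x_p$ already wins, done in $p+1$ rounds; if not, the forbidden subgraphs force $\deg y\le q+p-3$ (or $q+p-4$ when $q\ge p$). The decisive step is then to play inside the neighbourhoods $G[N(x)]$ and $H[N(y)]$ and apply the Pikhurko--Veith--Verbitsky dichotomy (Lemma~\ref{lem:PVV}) to $A=H[N(y)]$: either $\sigma(A)$ is small and $D(A,B)\le\frac12 v(A)+\frac52$, or $\sigma(A)$ is large and one bounds it combinatorially (by $p-1$ if the big twin class is a clique, by $q-2+\alpha(P_{p-1})$ if it is independent). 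This is where the $\frac12 p$ saving comes from, and it is not visible in your sliding-window picture.
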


The proof of Theorem \ref{thm:sparkler} occupies the rest of this section.

\subsection{The lower bound}

Applied to a sparkler graph, Part 1 of Lemma \ref{lem:sp-lower} yields
$W'(S_{q,p})\ge\ell-s(S_{q,p})-1=\ell-q=p$. It remains to prove
the lower bound
\begin{equation}
  \label{eq:lower-qp}
 W'(S_{q,p})\ge\ell-\frac12\,p-2-\frac12(p\bmod 2)=q+\frac12\,p-2-\frac12(p\bmod 2).
\end{equation}
If $p=2$ or $p=3$, this bound immediately follows from Part 2 of Lemma~\ref{lem:sp-lower}.
We, therefore, assume that $p\ge4$.

In order to prove the bound \refeq{lower-qp},
we will construct two graphs $G_{a,b,n}$ and $H_{a,b,n}$ depending on integer parameters
$a$, $b$, and $n$. Adjusting appropriately the parameters 
$a$ and $b$, we will ensure that
\begin{enumerate}[label=(\alph*)]
\item\label{item:F} 
$G_{a,b,n}$ contains $S_{q,p}$ as a subgraph,
\item\label{item:noF}  
$H_{a,b,n}$ does not, and
\item\label{item:W}
$W(G_{a,b,n},H_{a,b,n})\ge q+\frac12\,p-2-\frac12(p\bmod 2)$
\end{enumerate}
for any choice of the parameter $n$.
Moreover, 
\begin{enumerate}[label=(\alph*)]\setcounter{enumi}{3}
\item\label{item:n} 
both $G_{a,b,n}$ and $H_{a,b,n}$ have more than $n$ vertices
\end{enumerate}
by construction.
Properties \ref{item:F}--\ref{item:n} will yield the bound \refeq{lower-qp} 
by Part 2 of Lemma \ref{lem:DD} as
the parameter $n$ can be taken arbitrarily large.

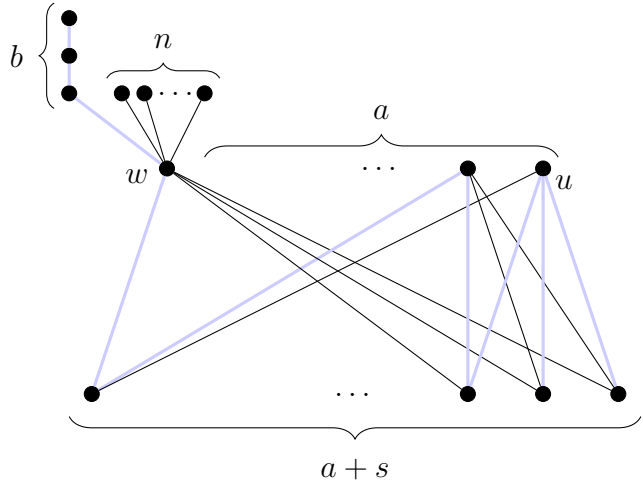
\begin{figure}
  \centering
\begin{tikzpicture}[every node/.style={circle,draw,inner sep=2pt,fill=black}]

\colorlet{mycol}{blue!20}

\path (-1,-3) node (c) {}
      (6,-3) node (d) {}
      (0,0) node (w) {} edge[very thick,mycol] (c) edge (d)
      (4,0) node (u) {} edge[very thick,mycol] (c) edge (d)
      (5,0) node (v) {} edge (c) edge[very thick,mycol] (d)
     (-.6,1) node (n1) {} edge (w)
      (-.3,1) node (n2) {} edge (w)
      (.5,1) node (n3) {} edge (w)
      (-1.3,1) node (b1) {} edge[very thick,mycol] (w)
     (-1.3,1.5) node (b2) {} edge[very thick,mycol] (b1)
     (-1.3,2) node (b3) {} edge[very thick,mycol] (b2)
      (5,-3) node (d1) {} edge (w) edge (u) edge[very thick,mycol] (v)
      (4,-3) node (d2) {} edge (w) edge[very thick,mycol] (u) edge[very thick,mycol] (v);

\node[draw=none,fill=none,below left] at ($(w)+(-.2,.1)$) {$w$};
\node[draw=none,fill=none,below right] at ($(v)+(.1,0)$) {$u$};

\draw[decorate,decoration={brace,amplitude=2mm,raise=2mm}] ($(b1)-(0,.2)$)--($(b3)+(0,.2)$);
\node[draw=none,fill=none] at (-2,1.5) {$b$};

\node[draw=none,fill=none] at (.15,1) {\ldots};
\draw[decorate,decoration={brace,amplitude=2mm,raise=2mm}] ($(n1)-(.2,0)$)--($(n3)+(.2,0)$);
\node[draw=none,fill=none] at (-.05,1.7) {$n$};

\node[draw=none,fill=none] at (2.85,0) {\ldots};
\draw[decorate,decoration={brace,amplitude=3mm,raise=1.5mm}] ($(w)+(.5,0)$)--($(v)+(.2,0)$);
\node[draw=none,fill=none] at (2.85,.75) {$a$};

\node[draw=none,fill=none] at (2.5,-3) {\ldots};
\draw[decorate,decoration={brace,mirror,amplitude=3mm,raise=3mm}] ($(c)+(-.3,0)$)--($(d)+(.3,0)$);
\node[draw=none,fill=none] at (2.5,-4) {$a+s$};

\end{tikzpicture}
\caption{Construction of $G_{a,b,n}$ by gluing $K_{a+1,a+s}$ and $S_{n+1,b}$
at the central vertex $w$ of the latter. Like the shown subgraph $S_{3,7}$ (with central vertex~$u$),
the graph $G_{a,b,n}$ hosts a copy of $S_{s+1,2a+b}$ as a subgraph.}
\label{fig:G(F)}
\end{figure}

Denote $s=q-1$.
The construction of $G_{a,b,n}$ is illustrated in Fig.~\ref{fig:G(F)}.
This graph is composed by two subgraphs sharing one common vertex.
One of them is the complete bipartite graph $K_{a+1,a+s}$, 
that will be referred to as the \emph{$K$-component}.
The other is the sparkler graph $S_{n+1,b}$,
that will be referred to as the \emph{$S$-component}. 
The smaller part of the $K$-component includes the central vertex $w$ of the $S$-component.

Note that the larger part of the $K$-component is an independent set consisting of $a+s$ twins. 
The graph $H_{a,b,n}$ is obtained from $G_{a,b,n}$ by removing one vertex from this twin class.
Lemma \ref{lem:rm1twin} implies that
\begin{equation}\label{eq:aa-1}
W(G_{a,b,n},H_{a,b,n})\ge a+s.  
\end{equation}

Now, we fix the parameters $a$ and $b$ such that Conditions \ref{item:F}--\ref{item:W}
are fulfilled. Specifically, we set
$$
b=2+(p\bmod 2)\text{ and }a=\frac{p-b}2.
$$
Note that $b\in\{2,3\}$. 
The value of $a$ is integer due to the choice of $b$.
Our assumption that $p\ge4$ ensures that $a\ge1$.
Note that Condition \ref{item:W} now readily follows from~\refeq{aa-1}.

As easily seen, the graph $G_{a,b,n}$ contains a copy of $S_{q,p}$.
Indeed, we can put the central vertex at any vertex $u\ne w$ in the smaller
part the $K$-component. The larger part the $K$-component is large enough to accommodate
$s$ vertices adjacent to $u$ and, moreover, there still remains enough space to draw
a path from $u$ to $w$ via $2a-1$ intermediate vertices.
Prolonging it along the tail part of the $S$-component, 
we obtain a path emanating from $u$ and passing through $2a+b=p$
further vertices.

This path is clearly destroyed by removing a vertex from the larger part of the $K$-component.
It remains to prove that no copy of $S_{q,p}$ can be found in $H_{a,b,n}$ in any other way.
Assume, to the contrary, that $H_{a,b,n}$ contains an isomorphic copy $S'$ of $S_{q,p}$
with central vertex $u'$. Consider several cases.

\begin{itemize}[topsep=1mm,leftmargin=0mm,labelsep=1mm,itemindent=4mm,listparindent=4mm,itemsep=1.5mm]
\item
\textit{$u'\ne w$ belongs to the smaller part of the $K$-component of $H_{a,b,n}$.}
After locating the star part of $S'$ in the larger part of the $K$-component,
this part contains only $a-1$ unoccupied vertices. It follows that
the longest path that can be drawn from $u'$ has length $2a-2+b=p-2$.
\item
\textit{$u'=w$.}
Now the star part of $S'$ can be located among the $n$ pendant vertices of $H_{a,b,n}$
adjacent to $w$. However, by the assumption that $p\ge4$, the tail part of $S'$ does not fit
into the tail part of the $S$-component of $H_{a,b,n}$.
The longest path from $w$ in the $K$-component has length $2a+1<2a+b=p$.
\item
\textit{$u'$ belongs to the larger part of the $K$-component of $H_{a,b,n}$.}
Suppose that $s\le a+1$ because otherwise the star part of $S'$ does not fit
into the smaller part of the $K$-component. 
After locating the star part of $S'$ in this part,
the longest path that can be drawn in the $K$-component from $u'$ has length 
$2(a+1-s)<2a<p$ if it terminates in the larger part of the $K$-component.
Otherwise such a path has length at most $2(a+1-s)-1$ and, arriving at $w$,
can be prolonged in the $S$-component of $H_{a,b,n}$ to a path
of total length at most $2(a+1-s)-1+b=p+1-2s<p$.
\end{itemize}

We get a contradiction in each of the cases, which completes our analysis.
The proof of Part 1 of Theorem \ref{thm:sparkler} is complete.

\subsection{The upper bound}\label{ss:upper}

We now turn to proving the upper bound of Theorem \ref{thm:sparkler}.
We begin with a few simple properties of graphs without $S_{q,p}$ subgraphs.
The following lemma generalizes a property of $S_{4,4}$-free graphs observed in~\cite{VZh16}.

\begin{lemma}\label{lem:large-deg}
Suppose that a connected graph $H$ does not contain a subgraph $S_{q,p}$. 
Then either $\Delta(H)<q$ or $\Delta(H)\ge(v(H)/3)^{1/(2qp)}$.
\end{lemma}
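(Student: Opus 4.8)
The plan is to argue by contradiction: suppose $H$ is connected, $S_{q,p}$-free, and has a vertex $x$ of large degree $\Delta=\Delta(H)\ge q$; we will show that $H$ cannot then have too many vertices, which forces $\Delta\ge (v(H)/3)^{1/(2qp)}$. The key idea is that a vertex of degree $\ge q$ already supplies the entire star part $K_{1,q-1}$ of a sparkler, so the only way to avoid an $S_{q,p}$ subgraph is that \emph{no} path of length $p$ can be attached to $x$ (and similarly to any high-degree vertex) while keeping $q-1$ other neighbours of $x$ free. First I would record the immediate consequence: if $\deg x\ge q$, then there is no path $x=y_0,y_1,\dots,y_p$ in $H$ whose vertices $y_2,\dots,y_p$ avoid $q-1$ chosen neighbours of $x$ other than $y_1$; in particular, since $\deg x$ may be much larger than $q$, there is a lot of room, and one shows that in fact \emph{every} vertex of $H$ must lie within distance roughly $p$ of $x$ in a controlled sense, and more strongly that $H$ has no long induced path anywhere and bounded ``spread'' around $x$.

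The cleanest route is probably to bound $v(H)$ directly via a breadth-first search from $x$. First I would show that the eccentricity of $x$ is small: if some vertex were at distance $\ge p$ from $x$, a shortest path from $x$ to it would give a path of length $p$ starting at $x$, and because $\deg x\ge q>q-1$ this shortest path uses at most one neighbour of $x$, leaving $q-1$ neighbours free to form the star part — producing an $S_{q,p}$ subgraph, a contradiction. Hence every vertex of $H$ is within distance $p-1$ of $x$. Next I would bound the number of vertices at each BFS level. The point is that high degree cannot occur far from $x$ either: if $y$ is at distance $i$ from $x$ and $\deg y\ge q$, then combining a shortest $x$–$y$ path with a path emanating from $y$ into deeper levels again yields a long attachable path and hence a sparkler — so all vertices outside a bounded neighbourhood of $x$ have degree $<q$. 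Thus the BFS tree has bounded branching except near $x$, which lets one bound $v(H)$ by something like $\Delta\cdot q^{O(p)}$ or, more carefully, by $3\,\Delta^{2qp}$, giving the stated inequality.

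The main obstacle I expect is the bookkeeping needed to turn ``no attachable path of length $p$'' into a genuine bound on the number of vertices, because one has to be careful that the star part and the tail of the putative $S_{q,p}$ are vertex-disjoint — a shortest path from $x$ might reuse a neighbour of $x$, and a path from a deep vertex $y$ back toward $x$ might collide with the BFS tree. The way to handle this is to always work with shortest paths (which are induced, hence do not revisit earlier levels) and to exploit the slack $\deg x-(q-1)$: as long as this slack is at least, say, $p$, there are always enough untouched neighbours of $x$ to complete the star part regardless of which single neighbour the path uses. When the slack is small, i.e. $q\le \Delta< q+p$, one instead notes $\Delta$ is already bounded by a constant depending only on $q,p$, so the bound $\Delta\ge(v(H)/3)^{1/(2qp)}$ is trivially what needs checking and reduces to a crude upper bound on $v(H)$ of the form $(q+p)^{O(qp)}$, obtained by the same BFS argument with the weaker degree bound $\Delta<q+p$ at every level. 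Assembling the two regimes and choosing the exponent $2qp$ generously to absorb the constants completes the proof.
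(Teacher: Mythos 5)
Your overall strategy --- bound the eccentricity of a high-degree vertex $x$, then count vertices in a BFS tree of bounded depth --- is correct and takes a genuinely different route from the paper. The paper instead shows the stronger claim that $H$ contains no path $P_{2qp}$ anywhere (not just shortest paths from $x$), which requires a nontrivial pigeonhole argument to avoid collisions between the putative tail and the neighbours of a high-degree vertex, and then it bounds the radius of an arbitrary spanning tree. Your route needs only the much easier observation that a \emph{shortest} path from $x$ is induced and meets $N(x)$ in exactly one vertex, namely $y_1$; so once $\deg x\ge q$, any vertex at distance $\ge p$ from $x$ gives a sparkler, hence $\mathrm{ecc}(x)\le p-1$, hence $v(H)\le 1+\Delta+\Delta(\Delta-1)+\cdots+\Delta(\Delta-1)^{p-2}=O(\Delta^{p-1})$, which even improves the exponent from $2qp$ to $p-1$.

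The one place you over-complicate is the ``slack'' discussion. You worry that the tail of the sparkler might consume several neighbours of $x$ and therefore split the proof into regimes according to whether $\Delta\ge q+p$ or $q\le\Delta<q+p$. That worry is dispelled by the very observation you make a sentence earlier: since the shortest path is isometric, $y_2,\dots,y_p$ are at distance $\ge 2$ from $x$ and hence none of them is a neighbour of $x$. Only $y_1\in N(x)$, so $\deg x\ge q$ already leaves $q-1$ free neighbours for the star and no slack is ever needed. Likewise, the refinement that ``high degree cannot occur far from $x$'' is unnecessary: the trivial branching bound of at most $\Delta-1$ children per BFS level already gives the required count, because the number of levels is bounded by $p-1$ independently of any degree information at deeper vertices. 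With these simplifications your argument is a correct, self-contained, and in fact tighter proof of the lemma.
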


\begin{proof}
Assume that $\Delta(H)\ge q$, that is, $H$ contains a subgraph $K_{1,q}$.
Then $H$ cannot contain any subgraph $P_{2qp}$ because, together with $K_{1,q}$,
it would give an $S_{q,p}$ subgraph. Consider an arbitrary spanning tree $T$ in $H$
and denote its maximum vertex degree $d$ and its radius by $r$.
Note that 
$$
v(H)=v(T)\le 1+d+d(d-1)+\ldots+d(d-1)^{r-1}\le3d^r.
$$
Since $r\le 2qp$, we have $v(H)\le3d^{2qp}$. It follows that $\Delta(H)\ge d\ge(v(H)/3)^{1/(2qp)}$.
\end{proof}


\begin{lemma}\label{lem:noCB}
Suppose that a connected graph $H$ with at least $3(q+p)^{2qp}$ 
vertices does not contain a subgraph $S_{q,p}$. 
If $\Delta(H)\ge q$, then $H$ contains neither the cycle $C_{p+1}$ nor the broken fan $B_{p+2}$
as subgraphs.
\end{lemma}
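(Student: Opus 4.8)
\textbf{Proof plan for Lemma~\ref{lem:noCB}.}
The plan is to combine the degree dichotomy from Lemma~\ref{lem:large-deg} with a short argument that each of the two forbidden subgraphs, together with a high-degree vertex, would force an $S_{q,p}$ subgraph. Since $\Delta(H)\ge q$, we may fix a vertex $z$ with $\deg z\ge q$, i.e.\ $H$ contains a star $K_{1,q}$ centered at $z$. The key point, as in the proof of Lemma~\ref{lem:large-deg}, is that it suffices to find in $H$ a path $P_{2qp}$ that avoids $z$ (or more carefully, a long enough path disjoint from $q$ of the neighbors of $z$): attaching the star part at $z$ to such a path yields $S_{q,p}\subseteq H$, a contradiction. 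So the real content is: a $C_{p+1}$ or a $B_{p+2}$, in a \emph{large} connected graph with a high-degree vertex, can always be extended to a sufficiently long path, because connectivity and the size bound $v(H)\ge 3(q+p)^{2qp}$ rule out the degenerate situation where $H$ is essentially just the small subgraph itself.

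First I would handle $C_{p+1}$. Suppose $H\supseteq C$ where $C\cong C_{p+1}$. Since $H$ is connected and much larger than $p+1$, there is a vertex $x\in C$ with a neighbor outside $C$; walking away from $x$ one gets, by connectivity and the size bound, a path of any prescribed length rooted at $x$ and internally disjoint from $C$ — here is where $v(H)\ge 3(q+p)^{2qp}$ enters, via the same spanning-tree radius estimate $v(H)\le 3\Delta(H)^{2qp}$ used in Lemma~\ref{lem:large-deg}: if every path from $x$ in $H-(C\setminus\{x\})$ were short, $H$ would be small. Concatenating a long arc of $C$ through $x$ with this external path yields a long path $P$ in $H$. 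If $P$ can be chosen to avoid the star $K_{1,q}$ at $z$ (choosing $z$, if possible, off $C$ and off the external path; if $z$ must lie on $C$, reroute through the other arc and still use $q$ of its neighbors), we get $S_{q,p}\subseteq H$. The case of $B_{p+2}$ is analogous and in fact easier: $B_{p+2}$ already contains a path on many vertices together with a vertex $y$ of high degree, so one either reads off an $S_{q,p}$ directly inside $B_{p+2}$ when $p$ is large relative to $q$, or extends as above.

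The step I expect to be the main obstacle is the bookkeeping needed to guarantee that the long path we extract is genuinely vertex-disjoint from a $K_{1,q}$ in $H$ — i.e.\ coordinating the location of the high-degree vertex $z$ with the forbidden subgraph and its external extension. The cleanest way around this is to be generous with lengths: aim to build a path on at least $2qp$ vertices that avoids \emph{one particular} high-degree vertex $z$ and at least $q$ of its neighbors, using that $H$ minus $O(q+p)$ vertices is still connected and large (again by the radius bound, removing few vertices cannot shrink $H$ below $3(q+p)^{2qp}$ in the relevant sense, or one passes to a large connected component). Once such a path is in hand, the contradiction with $S_{q,p}$-freeness is immediate from the definition of the sparkler graph, and the lemma follows.
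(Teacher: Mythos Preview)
Your plan has a real gap, and it stems from not extracting the right consequence of Lemma~\ref{lem:large-deg}. You fix $z$ with $\deg z\ge q$ and then try to manufacture a path on $2qp$ vertices in $H$ (avoiding the star at $z$) by appealing to the spanning-tree radius estimate. But that estimate in Lemma~\ref{lem:large-deg} goes the other way: it shows that \emph{because} $H$ is $S_{q,p}$-free with $\Delta(H)\ge q$, $H$ contains no $P_{2qp}$ at all, hence its radius is small, hence $\Delta(H)$ must be large. In particular the long path you are aiming for cannot exist; your claim ``if every path from $x$ were short, $H$ would be small'' is simply false without a degree bound (think of a huge star with a $C_{p+1}$ attached at its center). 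The connectivity-after-deletion step is equally unsupported: removing $O(q+p)$ vertices can disconnect $H$, and there is no control over where the high-degree vertex lands.

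The fix---and the paper's actual argument---is to use Lemma~\ref{lem:large-deg} to get a vertex $z$ with $\deg z\ge (v(H)/3)^{1/(2qp)}\ge q+p$, not merely $\ge q$. Then no long path is needed: a $C_{p+1}$ already gives a path of length exactly $p$ from $z$ (if $z\in C$ walk around the cycle; if $z\notin C$ take a shortest path from $z$ to $C$ and continue along $C$), and a $B_{p+2}$ gives one by a short case analysis on the position of $z$ relative to the fan. Since at most $p$ of the neighbors of $z$ can lie on this tail, the remaining $\ge q$ neighbors supply the star part of $S_{q,p}$. The ``bookkeeping'' you flagged as the main obstacle evaporates once $z$ has degree $q+p$ rather than~$q$.
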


\begin{proof}
By Lemma~\ref{lem:large-deg}, $H$ has a vertex $z$ of degree at least $(v(H)/3)^{1/(2qp)}\ge q+p$.
Assume, to the contrary, that $H$ contains $C_{p+1}$ or $B_{p+2}$.
We show that such a subgraph could be used
for building an $S_{q,p}$ subgraph with central 
vertex $z$ (which is impossible
by assumption). Indeed, if $z$ belongs to a cycle of length $p+1$,
then the cycle provides the tail part of $S_{q,p}$, and $z$ has sufficiently many
neighbors to build also the star part of $S_{q,p}$. If there is a $C_{p+1}$ subgraph
not containing $z$, then we can even find $S_{q+1,p}$ in $H$ by considering
an arbitrary path from $z$ to this cycle.

If $H$ contains a $B_{p+2}$ subgraph, denote its vertices
by $y,y',\allowbreak y_1,y_2,\ldots,y_p$ where $y_1,y_2,\ldots,y_p$ appear (in this order)
along the path part of $B_{p+2}$, $y$ is adjacent to all of these vertices but $y_p$,
and $y'$ is adjacent to $y$.
If $z=y_i$ with $1<i<p-1$, then the tail part of $S_{q,p}$ in $H$
is formed by the path $zy_{i-1}\ldots y_1yy_{i+1}\ldots y_p$.
The case of $z=y_1$ is similar. If $z=y_{p-1}$, then the tail part of $S_{q,p}$
is $zy_{p-2}\ldots y_1yy'$. If $z=y$, then the tail is $zy_1\ldots y_p$.
If $z$ is $y'$ or $y_p$ or does not belong to the $B_{p+2}$ subgraph,
then we can find a tail even longer than needed.
\end{proof}

We now restate Part 2 of Theorem \ref{thm:sparkler} as
\begin{equation}
  \label{eq:upper-qp}
D'(S_{q,p})\le\max\left(p+2,\,q+\frac12\,p-\frac12(p\bmod 2)\right).  
\end{equation}
Let $G$ and $H$ be two graphs such that $G$ contains $S_{q,p}$ as a subgraph and $H$
does not. Moreover, assume that $H$ is connected and sufficiently large;
specifically, $v(H)\ge3(q+p)^{2qp}$. 
Using Part 1 of Lemma \ref{lem:DD}, we have to design a strategy allowing
Spoiler to win the Ehrenfeucht game on $G$ and $H$ so fast that the number of moves
corresponds to the bound~\refeq{upper-qp}.

Let $S$ be a subgraph of $G$ isomorphic to $S_{q,p}$.
Denote the central vertex of $S$ by $x$, and let Spoiler pebble $x$
in the first round. Let $y$ denote the vertex pebbled in response
by Duplicator in $H$. If $\deg y<q$, then Spoiler wins in the next $q$ rounds
by pebbling $q$ neighbors of $x$. We, therefore, suppose that
\begin{equation}
  \label{eq:degy}
 \deg y\ge q.
\end{equation}
This implies that $\Delta(H)\ge q$ and makes Lemma \ref{lem:noCB} applicable to~$H$.

Let $xx_1x_2\ldots x_p$ be the tail path of $S$.
If $x$ and $x_p$ are adjacent, this yields a $C_{p+1}$ subgraph in $G$,
and Spoiler wins by pebbling it due to Lemma \ref{lem:noCB}.
We, therefore, suppose that $x$ and $x_p$ are not adjacent.

Denote the number of neighbors of $x$ among $x_1,x_2,\ldots,x_{p-1}$ by~$r$.
We also suppose that Spoiler cannot win in $p+1$ moves just by pebbling the tail part of $S$,
that is, Duplicator manages to keep a partial isomorphism between
$G$ and $H$ in this case by pebbling some path $yy_1y_2\ldots y_p$ in $H$.
Note that $y$ cannot be adjacent to $y_p$ and must have exactly $r$ neighbors among $y_1,y_2,\ldots,y_{p-1}$.
Since $y$ cannot be the central vertex of any $S_{q,p}$ subgraph in $H$,
this implies that
\begin{equation}
  \label{eq:yrq}
\deg y < r+q-1.  
\end{equation}

By definition, we have $r\le p-1$.
In fact, this inequality is strict under the additional condition that $q\ge p$. 
Indeed, the equality $r=p-1$ means that
$y$ is adjacent to all $y_1,y_2,\ldots,y_{p-1}$.
In the case that $q\ge p$, the inequality \refeq{degy} shows that $y$ also has
a neighbor $y'$ different from any $y_i$, $1\le i\le p$. This yields a $B_{p+2}$
subgraph in $H$, a contradiction.

Therefore, the inequality \refeq{yrq} implies that
\begin{equation}
  \label{eq:degyless}
 \deg y\le q+p-3
\end{equation}
and, moreover,
\begin{equation}
  \label{eq:degyless+}
 \deg y\le q+p-4\text{ if }q\ge p.
\end{equation}

Let $G'=G[N(x)]$ and $H'=H[N(y)]$, where $G[X]$ denotes the induced subgraph of $G$
on a set of vertices $X$. Thus,
$G'$ has $\deg x$ vertices and $H'$ has $\deg y$ vertices.
Whenever Spoiler moves in $G'$, Duplicator is forced to respond in $H'$ and vice versa
because otherwise she loses immediately. For the graphs $G'$ and $H'$ we
now apply Lemma~\ref{lem:PVV}.

Set $A=H'$ and $B=G'$, and consider several cases. 
Assume first that $\sigma(H')\le\frac12\,v(H')+\frac12$.
By Lemma \ref{lem:PVV},
$$
D(H',G')\le\frac12\,v(H')+\frac52.
$$
It follows by \refeq{degyless} that $D(H',G')\le\frac12\,q+\frac12\,p+1$,
that is, Spoiler wins the game on $G'$ and $H'$ making no more than $\frac12\,q+\frac12\,p+1$ moves.
This allows him to win the game on $G$ and $H$ making no more than $\frac12\,q+\frac12\,p+2$ moves,
implying
\begin{equation}
  \label{eq:DGHqp}
D(G,H)\le\frac12\,q+\frac12\,p+2.
\end{equation}
This yields the bound \refeq{upper-qp} if
$\frac12\,q+\frac12\,p+2\le p+2$. The last inequality is false
exactly when $q>p$. But then we can use \refeq{degyless+} and, similarly to \refeq{DGHqp},
obtain a little bit better bound, namely
$$
D(G,H)\le\frac12\,q+\frac12\,p+\frac32.  
$$
Again, this implies the bound \refeq{upper-qp} if $\frac12\,q+\frac12\,p+\frac32\le p+2$.
The last inequality is false exactly when $q\ge p+2$. In this case
the bound \refeq{upper-qp} is also true because
$\frac12\,q+\frac12\,p+\frac32\le q+\frac12\,p-\frac12(p\bmod 2)$.
The last inequality is true because otherwise we would have
$q<3+(p\bmod 2)\le4$, contradicting the condition $q\ge p+2\ge4$.

Assume now that $\sigma(H')\ge\frac12\,v(H')+1$. By Lemma \ref{lem:PVV}
we now have
$$
D(H',G')\le\sigma(H')+2.
$$
Our further analysis depends on whether the largest twin class $T$ in $H'$
is a clique or an independent set.

Assume the former. Note that $|T|\le p-1$.
Indeed, the clique $T$ cannot contain more vertices for else
$T\cup\{y\}$ would be a clique of size at least $p+1$ in $H$,
and $H$ would hence contain a $C_{p+1}$ subgraph, contradicting Lemma \ref{lem:noCB}.
Thus, $\sigma(H')\le p-1$. 
Therefore, $D(H',G')\le p+1$ and $D(H,G)\le p+2$, and the bound \refeq{upper-qp} follows.

Assume now that the largest twin class in $H'$ is an independent set.
By the assumption we made above,
$H$ contains a path $y_1y_2\ldots y_{p-1}$ such that at most $q-2$ vertices of $N(y)$
do not belong to this path.
It follows that 
\begin{equation}
  \label{eq:sigma}
\sigma(H')\le q-2+\alpha(P_{p-1}).  
\end{equation}
Here, $\alpha(K)$ denotes the independence number of a graph $K$; that is,
$$
\alpha(P_{p-1})=\frac12\,p-\frac12(p\bmod 2).
$$
If the inequality \refeq{sigma} is strict, we obtain
$D(H',G')\le\sigma(H')+2\le q-1+\alpha(P_{p-1})$
and 
$$
D(H,G)\le q+\frac12\,p-\frac12(p\bmod 2).
$$
It remains to consider the case that $\sigma(H')=q-2+\alpha(P_{p-1})$.
This equality implies that the largest twin class in $H'$ is an inclusion-maximal independent set.
Under this condition, Lemma \ref{lem:PVV} gives us the better relation $D(H',G')\le\sigma(H')+1$,
and the same bound for $D(H,G)$ stays true.

The proof of Theorem \ref{thm:sparkler} is complete.






\section{General bounds}\label{s:general}

We are now prepared to prove our main result.

\begin{theorem}\label{thm:gen}
\mbox{}

\begin{enumerate}[label=\normalfont\bfseries\arabic*.]
\item
$D'(F)\le\frac23\,v(F)+1$ for infinitely many connected $F$.
\item
$W'(F)>\frac23\,v(F)-2$ for every connected $F$.
\end{enumerate}
\end{theorem}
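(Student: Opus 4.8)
The plan is to derive both parts of Theorem~\ref{thm:gen} directly from the sparkler bounds of Theorem~\ref{thm:sparkler}, supplemented by the combinatorial bounds of Lemma~\ref{lem:sp-lower} and the individual-graph dichotomy in Lemma~\ref{lem:PVV}.

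For Part~1, I would simply optimize the upper bound of Theorem~\ref{thm:sparkler} over the family of sparkler graphs. Writing $\ell = q+p$, Part~2 of that theorem (equivalently the restated bound~\refeq{upper-qp}) gives $D'(S_{q,p}) \le \max\!\left(p+2,\; q+\tfrac12 p - \tfrac12(p\bmod 2)\right)$. The two expressions inside the maximum balance when $p+2 \approx q + \tfrac12 p$, i.e.\ when $p \approx 2q-4$; taking $p = 2q-4$ (so $p$ is even, and $\ell = 3q-4$) yields $D'(S_{q,p}) \le p+2 = 2q-2 = \tfrac23(3q-3) \le \tfrac23\ell + 1$ after accounting for the constant. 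Letting $q$ range over all integers $\ge 3$ produces the promised infinite family of connected pattern graphs, establishing Part~1. The only care needed here is bookkeeping of the additive constants and checking that $q\ge3$, $p\ge2$ hold so that Theorem~\ref{thm:sparkler} applies.

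For Part~2, the strategy is a case analysis on the structure of an arbitrary connected $F$ with $\ell$ vertices, in each case producing a lower bound of the form $W'(F) \ge \tfrac23\ell - O(1)$, and then arguing that the constants line up to give the strict inequality $W'(F) > \tfrac23\ell - 2$. If $F$ has a long pendant path or a large pendant star, Lemma~\ref{lem:sp-lower} gives $W'(F) \ge \ell - p(F) - 1$ or $W'(F) \ge \ell - s(F) - 1$, which is good whenever $p(F)$ or $s(F)$ is at most roughly $\tfrac13\ell$. The remaining, hard case is when $F$ has a long pendant path \emph{and} a large pendant star attached near the same region — precisely the situation in which $F$ looks like a sparkler $S_{q,p}$ with $p\approx 2q$ (or contains one as an induced subconfiguration controlling the twin structure). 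In that regime I would invoke the lower bound~\refeq{Wspa}, i.e.\ Part~1 of Theorem~\ref{thm:sparkler}, which gives $W'(S_{q,p}) \ge \tfrac12 p + q - \tfrac52$; with $p\approx 2q$ this is $\approx 2q - \tfrac52 \approx \tfrac23\ell - \tfrac52$, and one checks the sparkler bound never drops to $\tfrac23\ell - 2$. For a general $F$ that is not literally a sparkler but has this shape, the argument of Section~\ref{s:sparklers} should still apply: build host graphs $G_n \supseteq F$ and $H_n \not\supseteq F$ differing by one vertex of a large twin class, so that Lemma~\ref{lem:rm1twin} and Part~2 of Lemma~\ref{lem:DD} yield the bound.

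The main obstacle I anticipate is Part~2's case analysis: making the trichotomy (large $p(F)$, large $s(F)$, or sparkler-like) exhaustive and proving the lower bound uniformly across the boundary cases, since a pattern graph may have several competing pendant structures and the twin-class constructions of Section~\ref{s:sparklers} must be adapted to whichever one is present. One must be careful that when both $p(F)$ and $s(F)$ are moderately large the three bounds — $\ell - p(F) - 1$, $\ell - s(F) - 1$, and the sparkler bound $\tfrac12 p(F) + s(F) + 1 - \tfrac52$ — genuinely cover all parameter ranges with the best of them always exceeding $\tfrac23\ell - 2$; this is a routine but delicate optimization over $(p(F), s(F))$, and I would present it as an explicit minimax computation at the end of the proof.
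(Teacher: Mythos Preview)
Your Part~1 is correct and matches the paper: optimize the sparkler upper bound from Theorem~\ref{thm:sparkler} at $p=2q-4$.

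Part~2 has a genuine gap. Your plan hinges on a third lower bound ``$\tfrac12\,p(F)+s(F)+1-\tfrac52$'' for an arbitrary $F$, obtained by adapting the construction of Section~\ref{s:sparklers}. But Theorem~\ref{thm:sparkler} Part~1 is proved only for $F=S_{q,p}$: the verification that $H_{a,b,n}$ contains no copy of $S_{q,p}$ is a case analysis specific to the sparkler. For a general $F$ with $p(F),s(F)>\tfrac13\ell$ there can still be up to roughly $\tfrac13\ell$ further vertices carrying arbitrary structure (e.g.\ a clique attached at the star centre), and then neither $G_{a,b,n}$ nor $H_{a,b,n}$ --- both triangle-free --- contains $F$, so the construction yields nothing. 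Saying ``the argument of Section~\ref{s:sparklers} should still apply'' is exactly the missing step.

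What the paper does instead is introduce a third parameter $\spa(F)$, the maximum tail length of a \emph{pendant sparkler subgraph} of $F$, and prove (Lemma~\ref{lem:sp}) the bound $W'(F)\ge\ell-\spa(F)-3$ via a different host pair: $K_\ell$ versus $K_{\ell-\spa(F)-3}$, each with a pendant $S_{n,\spa(F)+1}$ attached. This works for every $F$ with a pendant sparkler. Now if $W'(F)<\tfrac23\ell-1$, Lemma~\ref{lem:sp-lower} and Lemma~\ref{lem:sp} together force $s(F)>\tfrac13\ell$, $p(F)>\tfrac13\ell$, and $\spa(F)>\tfrac13\ell-2$, giving three pendant subgraphs each on more than $\tfrac13\ell+1$ vertices; a short vertex-counting argument then shows they must overlap so heavily that $F$ \emph{is} a sparkler. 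Only at that point does Theorem~\ref{thm:sparkler} Part~1 apply. Your minimax over $(p(F),s(F))$ cannot replace this: without the $\spa(F)$ bound, the two bounds from Lemma~\ref{lem:sp-lower} alone give only $W'(F)\ge\ell-\max(p(F),s(F))-1$, which can be as small as $1$ when one of the parameters is near~$\ell$. (Lemma~\ref{lem:PVV}, which you list among your tools, is used only for the upper bound and plays no role here.)
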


We prove Theorem \ref{thm:gen} in the rest of this section.
Part 1 readily follows from Part 2 of Theorem \ref{thm:sparkler} by
setting $q=t+2$ and $p=2t-2$.

In order to prove the lower bound, we introduce the following notion.
Let $v$ be a vertex of a connected graph $F$. By a \emph{$v$-branch} of $F$ we mean a subgraph
of $F$ induced by the vertex set of a connected component of $F\setminus v$
in union with the vertex $v$ itself. 
Let $S$ be a $v$-branch of $F$. We call $S$ a \emph{pendant sparkler subgraph $S_{q,p}$} of $F$ if
\begin{itemize}
\item 
$S$ is isomorphic to $S_{q,p}$,
\item 
$v$ is the end vertex of the tail part of $S$, and
\item 
$v$ has degree at least 3 or exactly 1 in $F$.
\end{itemize}
Note that $\deg v=1$ means that $F=S$, that is, $F$ is a sparkler graph itself.
As usually, we suppose that $q\ge3$. However, we allow $p=1$ ($S$ is a star $K_{1,q}$ attached at one
of its leaves to the rest of the graph).
In addition, we even allow $p=0$ ($S$ is a pendant star $K_{1,q-1}$ attached
at the central vertex). In the case that $F$ has at least one pendant sparkler subgraph,
we write $\spa(F)$ to denote the maximum $p$ such that $F$ has a pendant $S_{q,p}$ for some $q\ge3$.

\begin{lemma}\label{lem:sp}
Let $F$ be a connected graph with $\ell$ vertices. If $F$ has a pendant sparkler subgraph, then
$W'(F)\ge\ell-\spa(F)-3$.  
\end{lemma}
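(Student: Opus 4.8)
The plan is to apply Part 2 of Lemma \ref{lem:DD}, so it suffices to exhibit, for every sufficiently large $n$, a pair of connected graphs $G_n, H_n$ each on more than $n$ vertices such that $G_n$ contains $F$ as a subgraph, $H_n$ does not, and $W(G_n, H_n) \ge \ell - \spa(F) - 3$. Write $p = \spa(F)$ and let $S = S_{q,p}$ be a pendant sparkler subgraph of $F$ attached at the vertex $v$ (the end of the tail part), so that $F = S \cup F'$ where $F'$ is the union of all other $v$-branches; note $F'$ has $\ell - v(S) + 1 = \ell - q - p + 1$ vertices and $v \in F'$. The idea is to build $G_n$ by taking $F'$ and, at the vertex $v$, gluing on a gadget that is $S_{q,p}$-rich but still $S_{q,p+1}$-free, in the same spirit as the construction $G_{a,b,n}$ from Section \ref{s:sparklers}: attach to $v$ a large complete bipartite graph $K_{m,m'}$ (with $m = \ell - q - p$ chosen so that its smaller part, together with $v$, can host the non-$S$ part of $F$ plus enough intermediate path vertices to realize the tail of length $p$) together with a pendant star $K_{1,n}$ somewhere on it to make the graph large. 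Then $H_n$ is obtained from $G_n$ by deleting one vertex of the large twin class (the bigger part of the $K$-gadget, an independent set of size $\ge \ell - \spa(F) - 3$), so that Lemma \ref{lem:rm1twin} immediately gives $W(G_n, H_n) \ge \ell - \spa(F) - 3$.

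First I would verify the positive direction: $G_n$ contains $F$. Place the copy of $F'$ on its designated vertices (with its copy of $v$ at the gluing vertex), and then realize the pendant sparkler part $S_{q,p}$ emanating from $v$ inside the $K$-gadget — the tail of length $p$ runs through intermediate vertices of the complete bipartite graph and the star $K_{1,q-1}$ is hosted at the far end using vertices of the large part. The counting of how many vertices of each part are needed is exactly analogous to Fig.~\ref{fig:G(F)} and the discussion following it; I would choose the size of the smaller part of the $K$-gadget just large enough for this. Second, and this is where the real work lies, I would prove that $H_n$ contains no copy of $F$. The key structural fact to exploit is that any copy of $F$ in $H_n$ must, in particular, contain a copy of the pendant sparkler $S_{q,p}$; one argues that every such copy of $S_{q,p}$ must use the $K$-gadget for its tail (the rest of $H_n$, being $F'$ with bounded relevant structure, cannot host a long enough path together with a high-degree star), and then a case analysis on where the central vertex of that $S_{q,p}$ sits — mirroring the three bullet cases in the lower-bound proof of Theorem \ref{thm:sparkler} — shows that after deleting one twin from the large part, the longest available tail drops below $p$, or the star part no longer fits. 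Because the full copy of $F$ is more constrained than a lone $S_{q,p}$ (it must also accommodate $F'$ hanging off $v$), the same deletion kills it.

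The main obstacle I anticipate is making the "no copy of $F$ in $H_n$" argument fully rigorous, since a priori a copy of $F$ need not respect the intended decomposition: the pendant sparkler of the hypothetical embedded $F$ could in principle be realized partly inside the image of $F'$. I would handle this by observing that the number of vertices in $F'$ is too small — only $\ell - q - p$ non-$v$ vertices — to supply the $p + q - 1$ vertices of the sparkler in addition to the rest of $F$, so the bulk of any $S_{q,p}$ inside an embedded $F$ is forced into the $K$-gadget; then one argues degree-by-degree and length-by-length exactly as in the sparkler case. A secondary technical point is the additive constant: I would need to check that the independent twin class in the $K$-gadget really has size at least $\ell - \spa(F) - 3$ for the parameters chosen, which pins down how I size the gadget (and explains the "$-3$" rather than a cleaner "$-1$": two units are lost to accommodating $F'$ and the intermediate tail vertices, one to the $\pm1$ slack in the gluing).
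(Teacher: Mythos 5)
Your proposal takes a genuinely different and considerably more complicated route than the paper, and it leaves the hardest step in a state where I am not convinced it can be completed.

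The paper's construction is much simpler and does not involve $F'$ at all. It takes $G$ to be the complete graph $K_\ell$ with a pendant sparkler $S_{n,p+1}$ attached at the end of its tail, and $H$ to be $K_{\ell-p-3}$ with the same pendant $S_{n,p+1}$. Then $G\in\subgr F$ trivially because $K_\ell$ contains every $\ell$-vertex graph, and $H\notin\subgr F$ by a short counting argument: since $\spa(F)=p$ (and $F$ is not itself a sparkler, handled separately via Part 2 of Lemma~\ref{lem:sp-lower}), a hypothetical copy of $F$ in $H$ can place at most $p+2$ vertices in the pendant tree part, so at least $\ell-p-2$ vertices would have to fit into $K_{\ell-p-3}$, which is impossible. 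Finally, the $\ell-1$ clique vertices (all but the gluing vertex) form a twin class in $G$, $H$ is obtained by removing $p+3$ of these twins, and Lemma~\ref{lem:rm1twin} (iterated $p+3$ times) gives $W(G,H)\ge\ell-p-3$.

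Your plan misses this key simplification: you do not need $G_n$ to "look like $F$" around $v$, you only need it to contain $F$ somewhere, and a single large clique does that for free. By gluing an explicit copy of $F'$ to a bipartite gadget, you inherit exactly the difficulty you identify yourself: a copy of $F$ in $H_n$ need not respect the decomposition $F=S\cup F'$, and the pendant sparkler of the embedded $F$ can be realized using vertices of the $F'$-part in ways you cannot easily control, since $F'$ is an arbitrary graph with possibly long paths and high-degree vertices. You flag this as "the main obstacle I anticipate" but offer only a vertex-count heuristic for why it should be resolvable, and that heuristic is not tight enough: $F'$ has $\ell-q-p$ vertices, which is not obviously too few to contribute to a stray $S_{q,p}$. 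A second structural problem is that you remove only a single twin from an independent twin class in a bipartite gadget, counting on the sparkler-lower-bound case analysis from Section~\ref{s:sparklers} to carry over; but that case analysis is specific to the standalone sparkler pattern and does not account for the extra freedom a copy of $F$ gains from the attached $F'$. By contrast, the paper removes $p+3$ clique twins at once, which makes the nonexistence of $F$ in $H$ a one-line pigeonhole argument. I would recommend scrapping the $F'$-plus-gadget construction and adopting the $K_\ell$ versus $K_{\ell-p-3}$ approach.
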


\begin{proof}
We suppose that $F$ is not a sparkler graph because otherwise
the required bound readily follows from Part 2 of Lemma \ref{lem:sp-lower}.

  Denote $p=\spa(F)$. Let $n\ge3$ be an integer parameter.
Let $G$ be obtained from $K_\ell$ and $S_{n,p+1}$ by gluing these
graphs at the end vertex of the tail of $S_{n,p+1}$. Thus, $G$
can be seen as a pendant $S_{n,p+1}$ attached to $K_\ell$. 
Let $H$ be obtained similarly from $K_{\ell-p-3}$ and $S_{n,p+1}$.

Note that $G$ contains $F$ as a subgraph just because the $K_\ell$ part of $G$
is large enough to contain any graph with $\ell$ vertices.
On the other hand, $H$ does not contains an $F$ subgraph.
Indeed, since the tail of the $S_{n,p+1}$ part of $H$ is 1 larger than $p$,
it can host at most a path of length $p+2$ (which is useful if $F$ has a pendant path). 
However, the remaining part of $F$ has at least $\ell-p-2$ vertices and
does not fit into the $K_{\ell-p-3}$ part of~$H$.

Now, note that the $K_\ell$ part of $G$ with the end vertex of $S_{n,p+1}$ excluded
is a twin class in $G$. Moreover, $H$ is obtained from $G$ by removing
$p+3$ twins from this class. Similarly to Lemma \ref{lem:rm1twin},
we have $W(G,H)\ge\ell-p-3$ (which can also be formally deduced from Lemma \ref{lem:rm1twin}
by repeatedly applying it $p+3$ times and using transitivity of the logical equivalence
in the $(\ell-p-4)$-variable logic). Since the parameter $n$ can be chosen arbitrarily large,
we conclude that $W'(F)\ge\ell-p-3$.
\end{proof}

We are now ready to prove Part 2 of Theorem \ref{thm:gen}. 
Assume that $W'(F)<\frac23\,\ell-1$ as otherwise we are done. By Lemma \ref{lem:sp-lower}, this implies that
\begin{eqnarray}
 s(F)&>&\frac13\,\ell\text{\quad and}\label{eq:sell}\\
 p(F)&>&\frac13\,\ell.\label{eq:pell}
\end{eqnarray}
The former estimate implies that $F$ has a pendant sparkler subgraph.
Lemma \ref{lem:sp} is, therefore, applicable. It implies that
\begin{equation}
  \label{eq:spell}
 \spa(F)> \frac13\,\ell-2.
\end{equation}
From \refeq{sell}--\refeq{spell} we conclude that $F$ contains a pendant star $F_1$,
a pendant path $F_2$, and a pendant sparkler $F_3$ such that each of these subgraphs
has more than $\frac13\,\ell+1$ vertices. 
If $F_1$ and $F_3$ are vertex-disjoint or share only one vertex,
which must be the central vertex of $F_1$ and the end vertex of the tail part of $F_3$,
these subgraphs together occupy more than $\frac23\,\ell+1$ vertices,
and then there remains not enough space for the pendant path $F_2$.
Therefore, $F_1$ and $F_3$ share at least two vertices, which
is possible only if $F_1$ coincides with the star part of $F_3$.
In this case, $F_3$ contains more than $\frac23\,\ell-1$ vertices.
It follows that $F_3$ and $F_2$ cannot be disjoint, for else they together
would occupy more than $\ell$ vertices.
Therefore, $F_3$ and $F_2$ overlap, which is possible
only if $F_2$ is the tail part of $F_3$, that is, $F=F_3$ is a sparkler graph.

Now, let $F=S_{q,p}$. By Part 1 of Theorem \ref{thm:sparkler}, 
$$
W'(S_{q,p})\ge\max\left(p,\,\ell-\frac12\,p-\frac52\right).
$$
As easily seen, the minimum of the right hand side over real $p\in(0,\ell)$
is attained at $p=\frac23\,\ell-\frac53$. It follows that $W'(F)>\frac23\,\ell-2$,
completing the proof.

\section{Exact values for stars and paths}\label{s:exact}

Theorem \ref{thm:sparkler} shows that, for sparkler graphs, the parameters
$W'(S_{q,p})$ and $D'(S_{q,p})$ take values in a segment of three
consecutive integers (even two if $p$ is odd). We now determine the
exact values of $W'(F)$ and $D'(F)$ for the two kinds of ``marginal'' sparklers,
namely for the star graphs $K_{1,\ell-1}$ and the path graphs~$P_\ell$.

We begin with the stars.
Note that $D'(K_{1,2})=1$ because every connected graph with at least 3 vertices contains $K_{1,2}=P_3$
as a subgraph. 

\begin{theorem}\label{thm:K_1s-v}
Let $s\geq 3$. Then
\begin{enumerate}[label=\normalfont\bfseries\arabic*.]
\item
$D'(K_{1,s})=s+1$;
\item
$W'(K_{1,s})=s$.
\end{enumerate}
\end{theorem}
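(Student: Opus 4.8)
The plan is to reformulate the claim and split it into four inequalities; recalling $W'(K_{1,s})\le D'(K_{1,s})$, the lower bound for $W'$ and the upper bound for $D'$ are the extreme ones. The first observation is that a graph contains a subgraph isomorphic to $K_{1,s}$ exactly when it has a vertex of degree at least $s$. The upper bound $D'(K_{1,s})\le s+1$ is then immediate, since the sentence $\exists x\,\exists y_1\ldots\exists y_s\,(\bigwedge_{i<j}y_i\ne y_j\wedge\bigwedge_i x\sim y_i)$ has quantifier depth $s+1$ and defines $\mathcal S(K_{1,s})$ on all graphs. Note also that Lemma~\ref{lem:sp-lower} already gives $W'(K_{1,s})\ge s-1$ (here $p(K_{1,s})=1$), so the two lower bounds below amount to sharpening this by one.

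For both lower bounds I would use, for a growing parameter $n$, the spider graphs $G_n$ and $H_n$ obtained by identifying one endpoint of each of, respectively, $s$ and $s-1$ disjoint copies of the path $P_n$ at a common centre. Then $G_n$ is connected, arbitrarily large, and contains $K_{1,s}$ at its centre, while $\Delta(H_n)=s-1$ since $s\ge3$, so $H_n\notin\mathcal S(K_{1,s})$; by Lemma~\ref{lem:DD} it suffices to bound $W(G_n,H_n)$ and $D(G_n,H_n)$ from below through the Ehrenfeucht game (Lemma~\ref{lem:ehr}). Duplicator keeps the centre of $G_n$ matched to the centre of $H_n$ and maintains a partial map that is leg-respecting, injective on the ``touched'' legs, and distance-preserving within each leg; this can always be continued, because a position with $k$ pebbles meets at most $k$ legs of $G_n$ (at most $k-1$ if the centre is also pebbled), and pebbled vertices on distinct legs are non-adjacent. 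With $s-1$ pebbles Duplicator sustains this for any number of rounds, as the $s-1$ legs of $H_n$ accommodate the at most $s-1$ touched legs of $G_n$; this gives $W(G_n,H_n)\ge s$, hence $W'(K_{1,s})\ge s$. With $s$ pebbles but only $s$ rounds she still survives: if the centre is pebbled, only $s-1$ of its neighbours can be exposed, which $H_n$'s centre accommodates; otherwise the at most $s$ touched legs of $G_n$ can be folded onto the $s-1$ legs of $H_n$ by a distance shift, since pebbles on distinct legs are non-adjacent. This gives $D(G_n,H_n)\ge s+1$, hence $D'(K_{1,s})\ge s+1$. (Triangle-freeness of $G_n$ matters here: a degree-$s$ vertex coming from a $K_s$ with a pendant path would let Spoiler win the $s$-round $s$-pebble game by displaying $K_s$.)

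The remaining inequality $W'(K_{1,s})\le s$ is the technical core and, I expect, the main obstacle. Equivalently, one must give Spoiler a strategy that wins the $d$-round $s$-pebble game, for a fixed $d$, on every pair $(G,H)$ where $G$ contains $K_{1,s}$ and $H$ is a sufficiently large connected graph with $\Delta(H)\le s-1$; the finitely many $s$-variable formulas of quantifier depth at most $d$ then assemble into the required sentence. Spoiler opens by pebbling a vertex $x$ of $G$ with $\deg x\ge s$ and receives a reply $y$ with $\deg y\le s-1$. If $\deg y\le s-2$ he pebbles $s-1$ distinct neighbours of $x$ and wins; otherwise $\deg y=s-1$ and he plays in the neighbourhoods $G[N(x)]$ (at least $s$ vertices) versus $H[N(y)]$ (exactly $s-1$ vertices). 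A rotation argument — keep $x$ pebbled and cycle one pebble through the $\ge s$ neighbours of $x$ — wins unless $G[N(x)]$ is a clique or an independent set. If $G[N(x)]$ is a clique then $G\supseteq K_{s+1}$, while a sufficiently large connected graph of maximum degree at most $s-1$ contains no $K_s$, so Spoiler wins by pebbling a $K_s$. The delicate case is $G[N(x)]$ independent: here Spoiler must exploit that $H$, being a large connected graph of bounded maximum degree, is everywhere sparse — it contains no dense spot such as a $K_{1,s-1}$ with all leaves pendant, it contains long induced paths, and more generally it is constrained by the $\Phi_\ell$-type structure theory mentioned in the Introduction — so that Duplicator's attempts to match the star at $x$ together with its outgrowths into $G$ get cornered after boundedly many moves. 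This is precisely where ``sufficiently large'' and ``connected'' are indispensable, since the pair $K_{s+1}$, $K_s$ shows that $W\le s$ fails over all graphs; I would expect the final accounting to go through Lemma~\ref{lem:PVV}, which bounds the quantifier depth, and hence the variable width, of the formulas separating the neighbourhood graphs.
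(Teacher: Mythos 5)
Your spider graphs are, up to relabeling of the length parameter, exactly the paper's graphs $M_{s,t}$ and $M_{s-1,t}$ (subdivisions of the stars $K_{1,s}$ and $K_{1,s-1}$), and your Duplicator strategies for the two lower bounds follow the same ``leg-respecting mirror map'' idea; the trivial sentence gives $D'(K_{1,s})\le s+1$. These parts are fine, though your handling of the last round of the $s$-round $s$-pebble game (the ``fold $s$ touched legs onto $s-1$ by a distance shift'' step) deserves the case analysis that the paper actually carries out.

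The genuine gap is the upper bound $W'(K_{1,s})\le s$, which you yourself flag as the obstacle. Your plan is to give Spoiler an $s$-pebble strategy by opening on a vertex $x$ of degree $\ge s$, passing to the neighbourhoods $G[N(x)]$ versus $H[N(y)]$, and invoking Lemma~\ref{lem:PVV} in the hard case where $G[N(x)]$ is independent. This route has a concrete obstruction: Lemma~\ref{lem:PVV} bounds \emph{quantifier depth}, and the bound it returns can be as large as $\sigma(H')+2$, which with $v(H')=s-1$ may itself be close to $s+1$; after adding the pebble that must stay on $x$ and $y$, the pebble count overshoots $s$. Nothing in your sketch actually brings the number of pebbles down to $s$, and the appeal to ``sparsity'' and ``$\Phi_\ell$-type structure theory'' is not a proof. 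The paper does something entirely different here: it abandons the game and directly writes an $s$-variable sentence
$$
\Phi_s \;=\; \exists x_1\ldots\exists x_s\Bigl(\bigwedge_{i<j}x_i\ne x_j \;\wedge\; \bigwedge_{i}\exists x_i\bigl(\bigwedge_{j\ne i}x_i\sim x_j\bigr)\Bigr),
$$
which cunningly reuses each $x_i$ to name a witness $u_i$ adjacent to the other $s-1$ chosen vertices. This sentence is obviously true on any graph containing $K_{1,s}$ (take the $v_j$ to be $s$ leaves and every $u_i$ the centre), and a short counting argument — using connectivity, more than $2s$ vertices, and $\Delta(H)\le s-1$ — shows that if $H\models\Phi_s$ then the $u_i$ are pairwise distinct vertices of degree exactly $s-1$, the set $\{v_1,\dots,v_s,u_1,\dots,u_s\}$ cannot be all of $H$, and whichever vertex in it has an outside neighbour gets degree $\ge s$, a contradiction. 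You would need a syntactic device of this kind, not a Spoiler strategy routed through Lemma~\ref{lem:PVV}, to close the gap.
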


\begin{proof}
Let $M_{s,t}$ denote the graph obtained
by subdividing each edge of $K_{1,s}$ into $P_{t+1}$; thus $v(M_{s,t})=st+1$.

\textbf{1.}
Consider $G=M_{s,t}$ and $H=M_{s-1,t}$. Obviously, $G$ contains $K_{1,s}$,
while $H$ does not. We now describe a strategy for Duplicator
allowing her to win the $s$-round game on $G$ and $H$, irrespectively of how large $t$ is.

Note that $G$ is obtained from $H$ by adding another, $s$-th
branch $P_{t}$. During the first $s-1$ rounds, at least one
branch of $G$ stays free of pebbles, and this allows Duplicator
to use a mirror strategy according to an isomorphism between
$H$ and a subgraph of $G$.
Note that, according to this strategy,
Duplicator pebbles the central vertex of one graph exactly when
Spoiler pebbles the central vertex in the other graph.
To describe the last, $s$-th round
of the game, we consider two cases.

\Case1{The central vertex was pebbled in the first $s-1$ rounds.}
Then there is at least one completely free branch in $H$
and at least two such branches in $G$. Duplicator can use
the mirror strategy in the $s$-th round too.

\Case2{The central vertex was not pebbled in the first $s-1$ rounds.}
This case is different from the previous one only when
each of the $s-1$ branches of $H$ contains a pebbled vertex,
and exactly one branch remains free in $G$.
Assuming this, we consider two subcases.

\Subcase{2-a}{Spoiler does not move in the free branch of $G$.}
The mirroring strategy is still available to Duplicator.
Note that this case includes the possibility that Spoiler pebbles the central vertex
in one of the graphs.

\Subcase{2-b}{Spoiler pebbles a vertex $u$ in the free branch of $G$}.
In this case, the central vertices of $G$ and $H$ are not pebbled.
This implies that $u$ is not adjacent to any other vertex pebbled in $G$.
If $t\ge4$, Duplicator is able to find a vertex $u'$ in $H$
not adjacent to the vertices previously pebbled in this graph.

\medskip

\textbf{2.}
\textit{The lower bound $W'(K_{1,s})\ge s$.}
Like in the proof of Part 1, take $G=M_{s,t}$, $H=M_{s-1,t}$,
and notice that Duplicator wins the $(s-1)$-pebble game on $G$ and $H$
whatever the number of rounds.

\textit{The upper bound $W'(K_{1,s})\le s$.}
Using $s$ variables, we have to write down a sentence expressing the existence
of a $K_{1,s}$ subgraph in a sufficiently large connected graph $G$.
Consider $\Phi_s$ saying that
$$
\E x_1\ldots\E x_s\of{
\bigwedge_{1\le i<j\le s}x_i\ne x_j
\und
\bigwedge_{1\le i\le s}\E x_i(\bigwedge_{j\ne i}x_i\sim x_j)
}.
$$
This sentence is obviously true on every graph containing $K_{1,s}$.
Suppose that a graph $H$ is connected, has more than $2s$ vertices,
and does not contain $K_{1,s}$. Let us prove that $\Phi_s$ is false on $H$.
Assuming the opposite, that is, $H\models\Phi_s$, consider
vertices $v_1,\ldots,v_s$ of $H$ whose existence is claimed by $\Phi_s$.
Then, for each $i\le s$, there is a vertex $u_i$ adjacent to all $v_1,\ldots,v_s$
but $v_i$. All $u_i$ are pairwise distinct
(if $u_i=u_j$, this vertex would have degree $s$). It follows that
$\deg u_i=s-1$ for each $i\le s$. Since $H$ is connected and has more than $2s$
vertices, at least one of the vertices in $\{v_1,\ldots,v_s,u_1,\ldots,u_s\}$
must have a neighbor outside this set. Such a vertex must have degree at least $s$,
a contradiction.
\end{proof}

We now turn to the path graphs.

\begin{theorem}\label{thm:P_n}
\mbox{}

  \begin{enumerate}[label=\normalfont\bfseries\arabic*.]
\item
$D'({P_\ell})=\ell-1$ for all $\ell\geq 4$.
\item
$W'({P_\ell})=\ell-2$ for all $\ell\geq 3$.
\end{enumerate}
\end{theorem}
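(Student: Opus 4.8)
\emph{A single construction for the lower bounds.} For a parameter $n$, let $G_n$ be the complete graph on $\{w,c_1,\dots,c_{\ell-2}\}$ with $n$ pendant vertices attached to $w$, and let $H_n=G_n-c_{\ell-2}$. Both graphs are connected and unboundedly large. A Hamilton path of the $(\ell-1)$-clique ending at $w$, followed by one step to a pendant, is a copy of $P_\ell$ in $G_n$, whereas every path in $H_n$ has at most $\ell-1$ vertices, so $H_n\notin\mathcal S(P_\ell)$ (this works for $\ell\ge 4$; for $\ell=3$, Part~2 is trivial since every large connected graph contains $P_3$). Moreover $c_1,\dots,c_{\ell-2}$ is a twin class of size $\ell-2$ in $G_n$. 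Now Lemma~\ref{lem:rm1twin} gives $W(G_n,H_n)\ge\ell-2$, so Part~2 of Lemma~\ref{lem:DD} yields $W'(P_\ell)\ge\ell-2$. For the depth I would establish $D(G_n,H_n)=\ell-1$ via the pebble game: with $\ell-2$ pebbles and only $\ell-2$ rounds Duplicator survives (map $w\mapsto w$, pendants to pendants, and pebbled clique vertices of $G_n$ injectively into the $(\ell-2)$-clique of $H_n$; she is never asked to realise $\ell-1$ mutually adjacent pebbled vertices, and the only dangerous event — being forced onto the universal vertex $w^{H}$ and then confronted with a pendant of $G_n$ — cannot be completed inside $\ell-2$ rounds), while pebbling the entire $(\ell-1)$-clique of $G_n$ lets Spoiler win with $\ell-1$ pebbles. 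Since any quantifier-depth-$d$ sentence defining $\mathcal S(P_\ell)$ asymptotically over connected graphs would distinguish $G_n$ from $H_n$ for large $n$, this forces $D'(P_\ell)\ge\ell-1$.

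\emph{Structure of large connected $P_\ell$-free graphs.} The upper bounds rest on analysing a connected $P_\ell$-free graph $H$ with at least $N=N(\ell)$ vertices. The radius/spanning-tree estimate behind Lemma~\ref{lem:large-deg} forces $\Delta(H)$ to be arbitrarily large as $N\to\infty$; and $H$ contains neither $K_{\ell-1}$ nor any cycle $C_r$ with $r\ge\ell-1$, because a bigger connected graph built around such a clique or cycle already contains $P_\ell$. (In particular the ``degenerate'' pieces that a sparse witness can close up into are severely restricted.)

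\emph{Upper bounds.} For $D'(P_\ell)\le\ell-1$ I would apply Part~1 of Lemma~\ref{lem:DD}: given large connected $G\in\mathcal S(P_\ell)$ and $H\notin\mathcal S(P_\ell)$, Spoiler fixes a copy $a_1\cdots a_\ell$ of $P_\ell$ in $G$ and pebbles along it, but uses the forbidden cliques/long cycles in $H$ and the degree mismatches between $G$ and $H$ to force a partial-isomorphism violation after only $\ell-1$ rounds rather than the naive $\ell$. For $W'(P_\ell)\le\ell-2$ I would exhibit an explicit width-$(\ell-2)$ sentence asserting the existence of a path on $\ell-2$ vertices with non-adjacent endpoints, together with suitable ``extendability'' conditions at each end, engineered so that — over large connected hosts — every way of satisfying it produces either a genuine copy of $P_\ell$ or a long cycle carrying extra material which, by the structure above, again contains $P_\ell$; this is precisely where the restriction to large connected hosts lets an $\ell$-vertex witness be traded for an $(\ell-2)$-variable formula. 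If this sentence can be arranged to have quantifier depth $\ell-1$, it delivers both upper bounds at once.

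\emph{Main obstacle.} I expect the delicate step to be the $W'$ upper bound. Since $\ell-2$ variables cannot simultaneously pin down $\ell$ distinct vertices, the defining sentence must assert something strictly weaker than ``$P_\ell$ exists'', and the work is to choose the extendability conditions so that (a) every degenerate, cycle-closing way of satisfying them still forces a $P_\ell$ once the host is large and connected, yet (b) the conditions remain weak enough to hold on \emph{every} $P_\ell$-containing large connected host. Making the companion depth-$(\ell-1)$ Spoiler strategy uniform across all admissible pairs $G,H$ requires similar care, but is less subtle.
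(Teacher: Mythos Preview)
Your lower-bound construction and the accompanying game arguments coincide with the paper's and are correct: the same pair $G_n,H_n$ (a clique with a pendant star, minus one clique twin) is used, the width bound comes from Lemma~\ref{lem:rm1twin}, and the depth bound from the observation that Spoiler needs all $\ell-2$ non-$w$ clique vertices of $G_n$ plus one further move before Duplicator's class-respecting strategy breaks.

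For the upper bounds your plan is still heuristic, and in one place points the wrong way. The paper's proof of $D'(P_\ell)\le\ell-1$ uses \emph{no} structure theorem for large $P_\ell$-free $H$; in particular Lemma~\ref{lem:large-deg} and ``degree mismatches'' play no role, and the argument already works for $v(H)\ge\ell$. The actual mechanism is a case split on the copy $a_1\cdots a_\ell$ inside $G$: let $m$ be maximal with $\{a_1,\dots,a_m\}$ a clique. If $m\ge\ell-1$, pebble that $K_{\ell-1}$ (which $H$ cannot host). Otherwise Spoiler pebbles all $a_i$ except two carefully chosen ones (e.g.\ $a_1$ and $a_m$ when $m\ge3$); after these $\ell-2$ moves, any two vertices $b_1,b_m$ in $H$ matching the omitted adjacency patterns must be \emph{distinct} (their adjacency to $b_{m+1}$ differs), hence would complete a $P_\ell$ in $H$. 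One extra move wins. When $m=2$ a small further case analysis on the adjacencies among $a_1,\dots,a_5$ is required. This ``omit two vertices, then show any completion in $H$ yields $P_\ell$'' trick is the missing idea in your $D'$ sketch.

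For $W'(P_\ell)\le\ell-2$ the paper also argues via the game rather than an explicit sentence: Spoiler pebbles the inner vertices $x_2,\dots,x_{\ell-1}$ and then slides one pebble (from some $x_j$, $j\ge3$) onto $x_1$. If Duplicator survives every such slide, one deduces that either $y_2$ has a neighbour off the copied path or $y_2$ is adjacent to all of $y_3,\dots,y_{\ell-1}$, and symmetrically for $y_{\ell-1}$; together with $C_{\ell-1}$- and $P_\ell$-freeness this forces $H$ into a rigid residual shape (all pendant vertices share a common neighbour, and there are at most $\ell-2$ non-pendant vertices), which Spoiler handles by a separate $(\ell-2)$-pebble strategy of length $\le\ell+1$. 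Your explicit-sentence plan is a legitimate alternative and morally encodes the same ``inner path plus extendability at each end'' idea, but when you verify your condition~(a) you will meet exactly this residual family; expect to need a structural dichotomy on $H$ rather than a single clean formula.
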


\begin{proof}
As we already mentioned, $D'(P_3)=W'(P_3)=1$ by trivial reasons. 
The equalities $D'(P_4)=3$ and $W'(P_4)=2$ are proved in \cite{VZh16}.
We, therefore, suppose that $\ell\ge5$.

\medskip

\textit{The upper bound $D'({P_\ell})\le\ell-1$.}
Let $G$ and $H$ be two graphs, each with at least $\ell$ vertices.
Suppose that $H$ contains no $P_\ell$ as a subgraph, while $G$ contains
a path $a_1a_2\ldots a_\ell$. Let $m$ be the largest number such
the vertices $a_1,a_2,\ldots,a_m$ form a clique. We split our analysis
into three cases.

\Case 1{$m\ge\ell-1$.}
Spoiler pebbles the clique $a_1,a_2,\ldots,a_{\ell-1}$ and wins because
there is no $K_{\ell-1}$ in $H$ (having a $K_{\ell-1}$ and at least one more
vertex, $H$ would contain $P_\ell$ by connectedness).

\Case 2{$3\le m\le\ell-2$.}
By the definition of $m$, the vertex $a_{m+1}$ is not adjacent
to at least one of the preceding vertices. Without loss of generality,
we can suppose that $a_1$ and $a_{m+1}$ are not adjacent
(see Fig.~\ref{fig:proof:P_n} where the proof is illustrated
in the case of~$\ell=5$; in Case 2 we then have $m=3$).

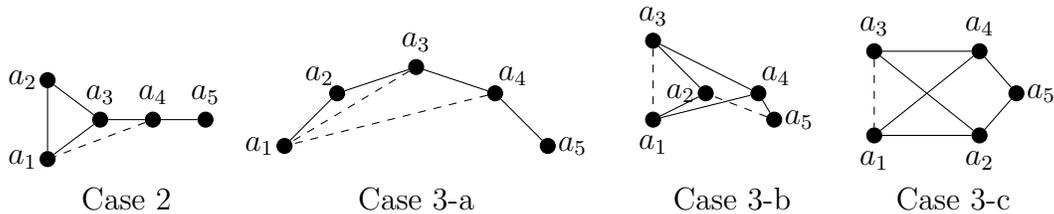
\begin{figure}
  \centering
\begin{tikzpicture}[every node/.style={circle,draw,inner sep=2pt,fill=black},scale=.7]
  \begin{scope}
\path (0,-.75) node (a1) {}
       (0,.75) node (a2) {} edge (a1)
       (1,0) node (a3) {} edge (a1) edge (a2)
       (2,0) node (a4) {} edge (a3) edge[dashed] (a1)
       (3,0) node (a5) {} edge (a4);
\node[draw=none,fill=none,left] at (a1) {$a_1$};
\node[draw=none,fill=none,left] at (a2) {$a_2$};
\node[draw=none,fill=none,above] at (a3) {$a_3$};
\node[draw=none,fill=none,above] at (a4) {$a_4$};
\node[draw=none,fill=none,above] at (a5) {$a_5$};
\node[draw=none,fill=none] at (1.5,-1.5) {Case 2};
  \end{scope}
  \begin{scope}[xshift=45mm,yshift=-5mm]
\path (0,0) node (a1) {}
       (1,1) node (a2) {} edge (a1)
       (2.5,1.5) node (a3) {} edge (a2) edge[dashed] (a1)
       (4,1) node (a4) {} edge (a3) edge[dashed] (a1)
       (5,0) node (a5) {} edge (a4);
\node[draw=none,fill=none,left] at (a1) {$a_1$};
\node[draw=none,fill=none,above left] at (a2) {$a_2$};
\node[draw=none,fill=none,above] at (a3) {$a_3$};
\node[draw=none,fill=none,above right] at (a4) {$a_4$};
\node[draw=none,fill=none,right] at (a5) {$a_5$};
\node[draw=none,fill=none] at (2.5,-1) {Case 3-a};
  \end{scope}
  \begin{scope}[xshift=115mm,yshift=5mm]
\path (0,-.5) node (a1) {}
       (1,0) node (a2) {} edge (a1)
       (0,1) node (a3) {} edge (a2) edge[dashed] (a1)
       (2,0) node (a4) {} edge (a3) edge (a1)
       (2.3,-.5) node (a5) {} edge (a4) edge[dashed] (a2);
\node[draw=none,fill=none,below] at (a1) {$a_1$};
\node[draw=none,fill=none,left] at (a2) {$a_2$};
\node[draw=none,fill=none,above] at (a3) {$a_3$};
\node[draw=none,fill=none,above right] at (a4) {$a_4$};
\node[draw=none,fill=none,right] at (a5) {$a_5$};
\node[draw=none,fill=none] at (1.5,-2) {Case 3-b};
  \end{scope}
  \begin{scope}[xshift=162mm,yshift=5mm]
\path (-.5,-.8) node (a1) {}
       (1.5,-.8) node (a2) {} edge (a1)
       (-.5,.8) node (a3) {} edge (a2) edge[dashed] (a1)
       (1.5,.8) node (a4) {} edge (a3) edge (a1)
       (2.2,0) node (a5) {} edge (a4) edge (a2);
\node[draw=none,fill=none,below] at (a1) {$a_1$};
\node[draw=none,fill=none,below] at (a2) {$a_2$};
\node[draw=none,fill=none,above] at (a3) {$a_3$};
\node[draw=none,fill=none,above] at (a4) {$a_4$};
\node[draw=none,fill=none,right] at (a5) {$a_5$};
\node[draw=none,fill=none] at (1,-2) {Case 3-c};
  \end{scope}
\end{tikzpicture}
  \caption{Proof of the bound $D'({P_\ell})\le\ell-1$ in the case of $\ell=5$:
$a_1a_2a_3a_4a_5$ is a 5-path in $G$. Dashed segments are drawn between
non-adjacent vertices.}
  \label{fig:proof:P_n}
\end{figure}

In the first $\ell-2$ rounds, Spoiler pebbles the vertices $a_i$
for all $i$ except $i=1$ and $i=m$.
Suppose that Duplicator manages to respond in $H$ with two paths $b_2,\ldots,b_{m-1}$
and $b_{m+1},\ldots,b_\ell$.
Then Spoiler wins in one extra move by pebbling either $a_1$ or $a_m$.
Duplicator loses because in $H$ there are no two vertices $b_1$ and $b_m$
with the adjacency patterns to the already pebbled vertices that $a_1$ and $a_m$
have in $G$. Indeed, if such $b_1$ and $b_m$ exist,
they should be distinct because of different adjacency to $b_{m+1}$. But then
$b_1b_2\ldots b_\ell$ would be an $\ell$-path in $H$, a contradiction.

\Case 3{$m=2$.}
Note that $a_1$ is not adjacent to $a_3$.

\Subcase{3-a}{$a_1$ and $a_4$ are not adjacent.}
In the first $\ell-2$ rounds, Spoiler pebbles the vertices $a_i$
for all $i$ except $i=2$ and $i=4$.
Suppose that Duplicator manages to pebble vertices $b_1,b_3,b_5,\ldots,b_\ell$ in $H$
such that $b_5\ldots b_\ell$ is an $(\ell-4)$-path.
Then Spoiler wins in one extra move by pebbling either $a_2$ or $a_4$.
Duplicator loses because in $H$ there are no two vertices $b_2$ and $b_4$
with the same adjacency pattern to the already pebbled vertices.
Indeed, if such $b_2$ and $b_4$ exist,
they should be distinct because of different adjacency to $b_1$. This is impossible as
$b_1b_2\ldots b_\ell$ would be an $\ell$-path in $H$.

\Subcase{3-b}{$a_1$ and $a_4$ are adjacent; $a_2$ and $a_5$ are not.}
Spoiler wins in $\ell-1$ moves using exactly the same strategy
as in the preceding subcase (now $b_2$ and $b_4$ should be distinct
because of different adjacency to $b_5$).

\Subcase{3-c}{$a_1$ and $a_4$ are adjacent; $a_2$ and $a_5$ are also adjacent.}
Spoiler pebbles the $\ell-2$ vertices $a_3,a_4,\ldots,a_\ell$.
Suppose that Duplicator succeeds to respond with an $(\ell-2)$-path $b_3\ldots b_\ell$ in $H$.
Then Spoiler wins in the next round by pebbling either $a_1$ or $a_2$.
Duplicator loses because in $H$ there are no two vertices $b_1$ and $b_2$
with the same adjacency pattern to the already pebbled vertices.
Indeed, if such $b_1$ and $b_2$ exist,
they should be distinct because of different adjacency to $b_3$.
Note, however, that such $b_1$ and $b_2$ need not be adjacent.
In any case, $b_1b_4b_3b_2b_5\ldots b_\ell$ would be an $\ell$-path in $H$.

\medskip

\textit{The upper bound $W'({P_\ell})\le\ell-2$.}
Consider two connected graphs $G$ and $H$ both with at least $\ell$ vertices 
and such that $G$ contains a copy of $P_{\ell}$ while $H$ does not. 
Let us prove that Spoiler has a winning strategy in the $(\ell-2)$-pebble game in at most $\ell+1$ rounds.
This will mean that $G$ and $H$ are distinguishable by a first-order sentence $\Phi_{G,H}$ with
$\ell-2$ variables of quantifier depth $\ell+1$. Since there are only finitely many
pairwise inequivalent sentences of a bounded quantifier depth, the inequivalent sentences $\Phi_{G,H}$
for various $G$ and $H$ can be combined into a single $(\ell-2)$-variable sentence
defining the existence of a $P_{\ell}$ subgraph over all sufficiently large connected graphs.

Assume first that $H$ has the following two properties:
\begin{description}
\item[(A)] 
all pendant vertices have a common neighbor;
\item[(B)]
there are at most $\ell-2$ non-pendant vertices.
\end{description}
We show how Spoiler can win in this case by playing with $\ell-2$ pebbles and making at most $\ell+1$ moves.

If $G$ does not satisfy Condition (A), Spoiler pebbles two pendant vertices of $G$
with distinct neighbors. If Duplicator responds with two pendant vertices in $H$,
Spoiler wins by pebbling their common neighbor. If one of Duplicator's vertices
in $H$ is not pendant, Spoiler wins by pebbling two its neighbors.

Suppose now that $G$ satisfies (A). If $G$ has no pendant vertex, then Spoiler
pebbles a pendant vertex in $H$ and wins in the next two moves.
We, therefore, suppose that $G$ has at least one pendant vertex.
Denote the common neighbor of all pendant vertices in $G$ by $z$ and in $H$ by $z'$.
Condition (A) implies that $z$ is the central vertex of a pendant star in $G$.
Denote the number of non-pendant vertices in $G$ by $t$. Since $G$ contains $P_{\ell}$
as a subgraph, we have $t\ge\ell-1$. Let Spoiler pebble $\ell-2$ non-pendant vertices
in $G$ different from $z$. By Condition (B), Duplicator is forced to pebble either $z'$
or a pendant vertex in $H$. In the former case, Spoiler keeps the pebble on $z'$, 
moving one of the other pebbles to a pendant vertex in $H$.
Since the counterpart of $z'$ in $G$ is a non-pendant vertex different from $z$,
Duplicator is forced to respond with a non-pendant vertex in $G$.
In each case, Spoiler forces pebbling a pendant vertex in one graph
and a non-pendant vertex in the other graph, which allows him to
win in the next two moves.

Now, we assume that $H$ does not satisfy (A) or (B) and show that
then Spoiler is able to win with $\ell-2$ pebbles in at most $\ell$ moves.
Since $H$ contains no $P_{\ell}$, has at least $\ell$ vertices, and is connected,
it contains also no $C_{\ell-1}$. Let $x_1\ldots x_{\ell}$ be a copy of $P_{\ell}$ in $G$. 
In the first $\ell-2$ rounds, Spoiler pebbles the vertices $x_2,\ldots,x_{\ell-1}$.
Suppose that Duplicator does not lose immediately and responds with vertices $y_2,\ldots,y_{\ell-1}$ 
that form a path $P$ in $H$. In the next round, Spoiler chooses one of the vertices
$x_3,\ldots,x_{\ell-1}$ and moves the pebble from this vertex to $x_1$.
Duplicator can survive under any choice of Spoiler only if there is a neighbor of $y_2$ outside $P$ or 
when all $y_3,\ldots,y_{\ell-1}$ are adjacent to $y_2$. By a symmetric reason,
Duplicator does not lose in this round only if
there is a neighbor of $y_{\ell-1}$ outside $P$ or all $y_2,\ldots,y_{\ell-2}$ are
adjacent to $y_{\ell-1}$. 
It remains to show that these conditions imply Conditions (A) and~(B).

It is impossible that both $y_2$ and $y_{\ell-1}$ have neighbors outside $P$
because then $H$ would contain either $C_{\ell-1}$ or $P_{\ell}$ as a subgraph. 
Therefore, we can assume that at least one of the vertices $y_2$ and $y_{\ell-1}$ (say, $y_{\ell-1}$) 
is adjacent to all the other vertices in $P$. Consider two cases.

\Case 1 {$y_2$ has a neighbor $y$ outside $P$.}
It is easily seen that if $y$ were adjacent to at least one of $y_3,\ldots,y_{\ell-1}$, 
then $H$ would contain $C_{\ell-1}$. Moreover, if at least one of $y_3,\ldots,y_{\ell-1}$ had 
a neighbor outside $\{y,y_2,\ldots,y_{\ell-1}\}$, then $H$ would contain $P_{\ell}$. 
A $P_{\ell}$ subgraph would exist also if a neighbor of $y_2$ outside $P$ had another neighbor outside $P$.
It follows that, in this case, all the vertices outside $P$ are pendant in $H$ and are adjacent to $y_2$.
We conclude that $H$ satisfies Conditions (A) and (B) above.

\Case 2 {$y_2$ does not have any neighbor outside $P$.} 
In this case the vertex $y_2$, like the vertex $y_{\ell-1}$, is adjacent to all the other vertices of $P$. 
Any vertex outside $P$ must be pendant in $H$ because otherwise $H$ would contain $P_{\ell}$ or $C_{\ell-1}$.
Any two such vertices must have a common neighbor for else $H$ would contain $P_{\ell}$.
Again, $H$ must satisfy (A) and~(B).

\medskip

\textit{The lower bounds $D'({P_\ell})\ge\ell-1$ and $W'({P_\ell})\ge\ell-2$.}
The bound for the width parameter follows directly from Part 1 of Lemma \ref{lem:sp-lower}.
In order to prove the bound for the depth parameter, consider the graphs $G_n$ and $H_n$ 
as in the proof of this lemma. That is, $G_n$ consists of $K_{\ell-1}$ with a pendant
star $K_{1,n}$ and $H_n$ consists of $K_{\ell-2}$ also with a pendant star $K_{1,n}$,
where the parameter $n$ can be chosen arbitrarily large.
Obviously, $G_n$ contains $P_\ell$ as a subgraph, while $H_n$ does not.
The vertex set of each of the graphs $G_n$ and $H_n$ can be split into three classes:
the pendant vertices, the central vertex of the pendant $K_{1,n}$,
and the remaining non-pendant vertices.
Consider the Ehrenfeucht-Fra{\"\i}ss{\'e} game on $G_n$ and $H_n$.
Duplicator does not lose as long as she respects the vertex classes.
Spoiler can break this Duplicator's strategy only by pebbling
all vertices of $G_n$ in the third class, for which he needs no less than $\ell-2$ moves.
If he does this in the first $\ell-2$ rounds of the game, Duplicator still does
not lose if in the $(\ell-2)$-th round she responds with the vertex in the second class of $H$.
It follows that Spoiler cannot win by making only $\ell-2$ moves.
\end{proof}


\section{Further questions}\label{s:concl}
\mbox{}

Theorem \ref{thm:sparkler} shows that, for any sparkler graph, 
the values of $W'(S_{q,p})$ and $D'(S_{q,p})$ can differ from each other by at most 2.
The examples of path and star graphs show that $W'(F)$ and $D'(F)$ can have different values.
Specifically, we have $D'(F)=W'(F)+1$ by Theorems \ref{thm:K_1s-v} and \ref{thm:P_n}.
How large can be the gap between the two parameters in general?
Do we have $D'(F)\le W'(F)+O(1)$ for all~$F$?
If true, this would be a quite interesting feature of Subgraph Isomorphism
as, in general, first-order properties expressible with a bounded number
of variables can require an arbitrarily large quantifier depth.
Note in this respect that $D'(F)<\frac32\,W'(F)+3$ as a consequence
of Part 2 of Theorem \ref{thm:gen}, that is, the gap between $W'(F)$ and $D'(F)$
cannot be superlinear.

As it was already mentioned in Section \ref{s:intro},
the questions studied in the paper have perfect sense
also in the case that we want to express the existence
of an \emph{induced} subgraph isomorphic to the pattern graph $F$.
Let $D[F]$, $W[F]$, $D'[F]$, and $W'[F]$ be the analogs of
the parameters $D(F)$, $W(F)$, $D'(F)$, and $W'(F)$,
where the bracket parentheses emphasize that \emph{Induced}
Subgraph Isomorphism is considered.
The argument showing that $W(F)=\ell$ does not work any more in the
induced case and, indeed, $D[K_3+e]=3$ for the 4-vertex paw graph $K_3+e$.
We noticed this example in \cite{induced}, where we also proved a
general lower bound $W[F]\ge(\frac12-o(1))\ell$. If this bound
were tight for infinitely many $F$, Induced Subgraph Isomorphism
for these patterns would be solvable in time $O(n^{(\frac12-o(1))\ell})$
which would be interesting because the Ne\v{s}et\v{r}il-Poljak bound,
based on fast matrix multiplication, could never be better that
$O(n^{\frac23\,\ell})$. On the other hand, it is still not excluded
that $W[F]=\ell$ for all but finitely many $F$, which makes
the problem of determining or estimating the parameters $W[F]$ and $D[F]$
quite intriguing.

As for the asymptotic descriptive complexity over connected graphs,
we observed in \cite{induced} that $W'[F]=W[F]$ and $D'[F]=D[F]$
for a large class of pattern graphs, including all those without
universal vertices. In general, $W'[F]\ge(\frac13-o(1))\ell$ for all~$F$.


\begin{thebibliography}{10}

\bibitem{AlonYZ95}
N.~Alon, R.~Yuster, and U.~Zwick.
\newblock Color-coding.
\newblock {\em J. {ACM}}, 42(4):844--856, 1995.

\bibitem{Dawar98}
A.~Dawar.
\newblock A restricted second order logic for finite structures.
\newblock {\em Inf. Comput.}, 143(2):154--174, 1998.

\bibitem{Diestel}
R.~Diestel.
\newblock {\em Graph theory}.
\newblock New York, NY: Springer, 2000.

\bibitem{Gall14}
F.~L. Gall.
\newblock Powers of tensors and fast matrix multiplication.
\newblock In {\em Proc. of the Int. Symposium on Symbolic and Algebraic
  Computation (ISSAC'14)}, pages 296--303. {ACM}, 2014.

\bibitem{GraedelG15}
E.~Gr{\"{a}}del and M.~Grohe.
\newblock Is polynomial time choiceless?
\newblock In {\em Fields of Logic and Computation {II}, Essays Dedicated to
  Yuri Gurevich on the Occasion of His 75th Birthday}, volume 9300 of {\em
  Lecture Notes in Computer Science}, pages 193--209. Springer, 2015.

\bibitem{GroheS05}
M.~Grohe and N.~Schweikardt.
\newblock The succinctness of first-order logic on linear orders.
\newblock {\em Logical Methods in Computer Science}, 1(1), 2005.

\bibitem{Immerman82}
N.~Immerman.
\newblock Upper and lower bounds for first order expressibility.
\newblock {\em J. Comput. Syst. Sci.}, 25(1):76--98, 1982.

\bibitem{Immerman-book}
N.~Immerman.
\newblock {\em Descriptive complexity}.
\newblock New York, NY: Springer, 1999.

\bibitem{LiRR14}
Y.~Li, A.~A. Razborov, and B.~Rossman.
\newblock On the {AC$^0$} complexity of {S}ubgraph {I}somorphism.
\newblock {\em SIAM J. Comput.}, 46(3):936--971, 2017.

\bibitem{Libkin04}
L.~Libkin.
\newblock {\em Elements of Finite Model Theory}.
\newblock Texts in Theoretical Computer Science. An {EATCS} Series. Springer,
  2004.

\bibitem{NesetrilP85}
J.~Ne\v{s}et\v{r}il and S.~Poljak.
\newblock On the complexity of the subgraph problem.
\newblock {\em Commentat. Math. Univ. Carol.}, 26:415--419, 1985.

\bibitem{OporowskiOT93}
B.~Oporowski, J.~G. Oxley, and R.~Thomas.
\newblock Typical subgraphs of 3- and 4-connected graphs.
\newblock {\em J. Comb. Theory, Ser. {B}}, 57(2):239--257, 1993.

\bibitem{PikhurkoVV06}
O.~Pikhurko, H.~Veith, and O.~Verbitsky.
\newblock The first order definability of graphs: Upper bounds for quantifier
  depth.
\newblock {\em Discrete Applied Mathematics}, 154(17):2511--2529, 2006.

\bibitem{Schweikardt13}
N.~Schweikardt.
\newblock A short tutorial on order-invariant first-order logic.
\newblock In {\em Proc. of the 8th Int. Computer Science Symposium in Russia
  (CSR'13)}, volume 7913 of {\em Lecture Notes in Computer Science}, pages
  112--126. Springer, 2013.

\bibitem{Turan84}
G.~Tur{\'{a}}n.
\newblock On the definability of properties of finite graphs.
\newblock {\em Discrete Mathematics}, 49(3):291--302, 1984.

\bibitem{Vardi95}
M.~Y. Vardi.
\newblock On the complexity of bounded-variable queries.
\newblock In {\em Proceedings of the 14th {ACM} Symposium on Principles of
  Database Systems}, pages 266--276. {ACM} Press, 1995.

\bibitem{VZh16}
O.~Verbitsky and M.~Zhukovskii.
\newblock The descriptive complexity of {S}ubgraph {I}somorphism without
  numerics.
\newblock In P.~Weil, editor, {\em Proc. of the 12th Int. Computer Science
  Symposium in Russia (CSR'17)}, volume 10304 of {\em Lecture Notes in Computer
  Science}, pages 308--322. Springer, 2017.
\newblock A full version is available at \url{arxiv.org/abs/1607.08067}.

\bibitem{induced}
O.~Verbitsky and M.~Zhukovskii.
\newblock On the first-order complexity of {I}nduced {S}ubgraph {I}somorphism.
\newblock In {\em Proc.\ of the 26th EACSL Annual Conference on Computer
  Science Logic (CSL 2017)}, volume~82 of {\em LIPIcs---Leibniz International
  Proceedings in Informatics}, pages 40:1--40:16, 2017.

\end{thebibliography}

\end{document}